\newcommand{\ol}{\overline}
\newcommand{\distance}{\texttt{distance}}
\newcommand{\nb}{\overline{n}}
\newtheorem{invariant}{Invariant}
  \tikzset{notestyleraw/.append style={rectangle}}
\author{Ivor van der Hoog}{Technical University of Denmark, Denmark}{idjva@dtu.dk}{
https://orcid.org/0009-0006-2624-0231}{}
\author{Eva Rotenberg}{Technical University of Denmark, Denmark}{erot@dtu.dk}{
https://orcid.org/0000-0001-5853-7909}{}
\author{Daniel Rutschmann}{Technical University of Denmark, Denmark}{daru@dtu.dk}{
https://orcid.org/0009-0005-6838-2628}{}
\def\BibTeX{{\rm B\kern-.05em{\sc i\kern-.025em b}\kern-.08em
    T\kern-.1667em\lower.7ex\hbox{E}\kern-.125emX}}
\begin{document}

\title{Simpler Universally Optimal Dijkstra
}

\titlerunning{Simpler Universally Optimal Dijkstra}

\authorrunning{Ivor van der Hoog, Eva Rotenberg, and Daniel Rutschmann}

\Copyright{Ivor van der Hoog, Eva Rotenberg, and Daniel Rutschmann}

\funding{{\it Ivor van der Hoog}, {\it Eva Rotenberg}, and {\it Daniel Rutschmann} are grateful to the Carlsberg Foundation for supporting this research via Eva Rotenberg's Young Researcher Fellowship CF21-0302 ``Graph Algorithms with Geometric Applications''. This work was supported by the the VILLUM Foundation grant (VIL37507) ``Efficient Recomputations for Changeful Problems'' and the European Union's Horizon 2020 research and innovation programme under the Marie Sk\l{}odowska-Curie grant agreement No 899987. }

\keywords{Graph algorithms, instance optimality, Fibonnacci heaps, simplification}

\ccsdesc[500]{Theory of computation~Design and analysis of algorithms}
\ccsdesc[500]{Theory of computation~Shortest paths}

\Copyright{}

\maketitle

\begin{abstract}
Let $G$ be a weighted (directed) graph with $n$ vertices and $m$ edges. Given a source vertex $s$, Dijkstra’s algorithm computes the shortest path lengths from $s$ to all other vertices in $O(m + n \log n)$ time. This bound is known to be worst-case optimal via a reduction to sorting.
Theoretical computer science has developed numerous fine-grained frameworks for analyzing algorithmic performance beyond standard worst-case analysis, such as instance optimality and output sensitivity.
Haeupler, Hlad{\'\i}k,  Rozho{\v{n}}, Tarjan, and  T{\v{e}}tek [FOCS '24] consider the notion of universal optimality, a refined complexity measure that accounts for both the graph topology and the edge weights.
For a fixed graph topology, the universal running time of a weighted graph algorithm is defined as its worst-case running time over all possible edge weightings of $G$. An algorithm is universally optimal if no other algorithm achieves a better asymptotic universal running time on any particular graph topology.

Haeupler, Hlad{\'\i}k,  Rozho{\v{n}}, Tarjan, and  T{\v{e}}tek show that Dijkstra’s algorithm can be made universally optimal by replacing the heap with a custom data structure. Their approach builds on Iacono’s [SWAT '00] working-set bound $\phi(x)$. This is a technical definition that, intuitively, for a heap element $x$, counts the maximum number of simultaneously-present elements $y$ that were pushed onto the heap whilst $x$ was in the heap.
They design  a new heap data structure that can pop an element $x$ in $O(1 + \log \phi(x))$ time. 
They show that Dijkstra's algorithm with their heap data structure is universally optimal. 

In this work, we revisit their result. We introduce a simpler heap property called timestamp optimality, where the cost of popping an element $x$ is logarithmic in the number of elements inserted between pushing and popping $x$. We show that timestamp optimal heaps are not only easier to define but also easier to implement. Using these time stamps, we provide a significantly simpler proof that Dijkstra’s algorithm, with the right kind of heap, is universally optimal.
\end{abstract}

\newpage
\section{Introduction}

Let $G = (V, E)$ be a (di)graph with edge weights $w$, $n$ vertices, and $m$ edges. Let $s$ be a designated source vertex. 
In the \emph{single-source shortest path} problem, the goal is to sort all the vertices in $V$ by the length of the shortest path from $s$ to $v$. 
Dijkstra’s algorithm with a Fibonacci heap solves this problem in $O(m + n \log n)$ time, which is worst-case optimal.

However, worst-case optimality can be overly coarse as a performance measure. 
A stronger notion, called \emph{universal optimality}, requires an algorithm to run as fast as possible on \emph{every} graph topology.
Recently, Haeupler, Hlad{\'\i}k,  Rozho{\v{n}}, Tarjan, and  T{\v{e}}tek~\cite{Haeupler24} proved that Dijkstra’s algorithm, when paired with a suitably designed heap, achieves universal optimality. The paper was well-received, as it won the best paper award at FOCS'24. 
Their contribution is threefold:
First, they observe that Dijkstra’s efficiency relies on a heap supporting \textsc{Push} and \textsc{DecreaseKey} in amortized constant time, and \textsc{Pop} in $O(\log n)$ time. 
They refine this analysis using the \emph{working-set bound} by  Iacono~\cite{iacono2000improved} (we forward reference Definition~\ref{def:working_set_size}). Intuitively, for a heap element $x$, the working set size $\phi(x)$ is the maximum number of elements $y$ that are simultaneously present in the heap while $x$ resides in it (excluding all $y'$ that already were in the heap). 
Haeupler et al.~\cite{Haeupler24} design a new heap data structure for in the word-RAM model in which the \textsc{Pop} operation for element $x$ takes $O(1 + \log \phi(x))$ time.

 Secondly, they show, for any fixed graph topology $G = (V, E)$ and designated source $s$, a universal lower bound of $
\Omega(m + n + \sum_{x_i \in V \setminus \{ s \} } \log \phi(x_i))$. 
This lower bound matches the running time of Dijkstra's algorithm, when it is equipped with their heap. 
They thereby prove that Dijkstra's algorithm is universally optimal.

Third, they refine their analysis by distinguishing between edge weight comparisons and all other operations. 
Let $m_G \leq m$ be some topology-dependent parameter.
By an extensive algebraic argument, they prove a lower bound of $
\Omega( m_G + \sum_{x_i \in V \setminus \{ s \} } (1 + \log( \phi(x_i))))$
on the number of weight comparisons.
They adapt Dijkstra's algorithm so that the number of edge weight comparisons performed is also asymptotically universally optimal. 

A recent paper by Haeupler et al.~\cite{haeupler2025bidirectional} shows a simple instance optimal result for the bidirectional problem: Compute, given $G$ and two vertices $(s, t)$, the shortest $st$-path. Although the problems are related, the techniques and bounds are different, and thus, we do not consider the $st$-path problem in the paper at hand. 
%\vspace{-1em}

\subparagraph{Contribution.}
We offer a significantly more concise proof of Dijkstra’s universal optimality.
Our main insight is to replace the working-set bound with a conceptually simpler measure called \emph{timestamps}.  
We maintain a heap $H$ and global time counter that increments after $H$.\textsc{Push}. 
For each heap element $x_i$, let $a_i$ and $b_i$ denote the time it was pushed and popped, respectively. 
We define $H$ to be \emph{timestamp optimal} if popping $x_i$ takes $O(1 + \log(b_i - a_i))$ time. 
Timestamps are not only conceptually simpler than working-set sizes, timestamp optimality is also considerably simpler to implement. 
We design a heap data structure that supports \textsc{Push} and \textsc{DecreaseKey} in amortized constant time, and \textsc{Pop} in amortized $O(1 + \log(b_i - a_i))$ time. 
Using timestamps, we show a universal lower bound of $
\Omega (m + \sum_{x_i \in V \setminus \{ s \} } (1 + \log (b_i - a_i)) )$ for any fixed graph topology $G = (V, E)$ and source $s$.  
We consider this to be our main contribution, as it offers a significantly simpler proof that Dijkstra's algorithm can be made universally optimal.
Finally, our formulation of timestamps allow us to apply recent techniques for universally optimal sorting~\cite{haeupler2025fast, Hoog2025Simpler} to show that Dijkstra's algorithm, with our timestamp optimal heap, matches our universal lower bound. 
Additionally, we show in Appendix~\ref{app:comparisons} that our algorithm is equally general to that of~\cite{Haeupler24}, as it can be made universally optimal with respect to edge weight comparisons.
While this proof is indeed new, different, and perhaps simpler, it is not self-contained; therefore, we relegate it to an appendix.

%\newpage
\section{Preliminaries}

Let $G = (V, E)$ be a (directed) graph and let $w$ assign to each edge a positive weight. 
For ease of notation, we say that $G$ has $n+1$ vertices and $m$ edges. 
In the \emph{single-source shortest path} problem (SSSP), the input consists of $G$, $w$, and a designated source vertex $s$. 
The goal is to order all $n$ vertices in $V \setminus \{ s \}$ by the length of the shortest path from $s$ to each vertex $v \in V$. 
We follow~\cite{Haeupler24} and assume that the designated source $s$ can reach every vertex in $V \setminus \{ s \}$ and that no two vertices are equidistant from $s$.

For weighted graph algorithms $A$ that solves SSSP, the running time $\textnormal{Runtime}(A, G, s, w)$ depends on the graph topology $G = (V, E)$, the designated source vertex $s$, and the edge weights $w$. 
Let $\mathbb{G}_n$ denote the set of all (directed) graph topologies $G = (V, E)$ with $|V| = n + 1$. 
For a fixed graph topology $G$, let $\mathbb{W}(G)$ denote the (infinite) set of all possible edge weightings of $G$.
Then, for a fixed input size $n$, the worst-case running time of an algorithm $A$ that solves the single-source-shortest path problem is a triple maximum:
\[
\textnormal{Worst-case}(A, n) := \max_{G \in \mathbb{G}_n} \, \max_{s \in V(G)} \, \max_{w \in \mathbb{W}(G)} \, \textnormal{Runtime}(A, G, s, w).
\]
\noindent
An algorithm $A$ is worst-case optimal if there exists a constant $c$ such that, for sufficiently large $n$, there exists no algorithm $A'$ with $\textnormal{Worst-case}(A, n) > c \cdot \textnormal{Worst-case}(A', n)$. 

Haeupler, Wajc, and Zuzic~\cite{haeupler2021universally} introduce a more fine-grained notion of algorithmic optimality for weighted graphs called \emph{universal optimality}. 
Universal optimality requires an algorithm to perform optimally for every fixed graph topology.
Formally, in the single-source-shortest path problem, it removes the outer two maximizations: For a fixed algorithm $A$, graph topology $G = (V, E)$, and source $s$, the universal running time is
\[
\textnormal{Universal}(A, G, s) := \max_{w \in \mathbb{W}(G)} \textnormal{Runtime}(A, G, s, w).
\]
An algorithm $A$ is \emph{universally optimal} if there exists a constant $c$ such that for every fixed topology $G$ and every $s \in V$, no algorithm $A'$ satisfies $
\textnormal{Universal}(A, G, s) > c \cdot \textnormal{Universal}(A', G, s)$.

\begin{algorithm}[t]
\begin{algorithmic}
\Require (Di)graph $G = (V, E)$, weighting $w$, designated source $s \in V$
\State $\distance \gets$ array with $n$ entries equal to $\infty$
\State H $\gets$ empty min-heap
\State $\distance[s] = 0$
\State H.\textsc{push}$(s, 0)$
\While{$H$ is not empty}
    \State u $\gets$ H.\textsc{Pop}()
    \State \textsc{Output}.append($u$)
    \For{Each edge $(u, v) \in E$}
        \If{$\distance[v]$ is $\infty$}
        \State $\distance[v]\gets\distance[u] + w((u,v))$
        \State H.\textsc{Push}$(v, \distance[v])$
        \ElsIf{$v$ is in $H$ and $\distance[v] > \distance[u] + w((u,v))$}
        \State $\distance[v]\gets\distance[u] + w((u, v))$
        \State H.\textsc{DecreaseKey}$(v, \distance[v])$
        \EndIf
    \EndFor
\EndWhile
\end{algorithmic}
\caption{Dijkstra's algorithm with a black-box heap $H$.}
\label{algo:dijkstra}
\end{algorithm}

\subsection{Dijkstra's algorithm}

Dijkstra~\cite{Dijkstra1959} presented an algorithm for the single-source shortest paths problem in 1959.
%The algorithm relies on a \emph{heap} data structure. 
Since the invention of Fibonnacci heaps\cite{fredman1987fibonacci}, Dijkstra's algorithm can be implemented in a way that is worst-case optimal, by utilising these heaps in the implementation of Dijkstra's algorithm.
A heap $H$ maintains a set of elements where each $x \in H$ is associated with a unique \emph{key} $\gamma(x)$. 
A heap $H$ typically supports the following five operations: 

\begin{itemize}
    \item $\textsc{Push}(x, y)$: Insert $x$ into $H$ where the key $\gamma(x)$ is $y$.
    \item $\textsc{DecreaseKey}(x, y)$: For $x \in H$ and $y \leq \gamma(x)$, update $\gamma(x)$ to $y$.
    \item $\textsc{Pop}()$: Remove and return the element $x \in H$ with the smallest key.
    \item $\textsc{Peak()}$: Return an element $x \in H$ with the smallest key.
    \item $\textsc{Meld}(H')$: Return a heap $H''$ on the elements in $H \cup H'$.
\end{itemize}

Fibonacci heaps support $\textsc{Push}$ and $\textsc{DecreaseKey}$ in $O(1)$ amortized time, $\textsc{Pop}$ in amortized $O(\log n)$ time, and \textsc{Peak} and \textsc{Meld} in worst-case $O(1)$ time~\cite{fredman1987fibonacci}. 
Dijkstra's algorithm only uses the first three heap operations. Using Fibonacci heaps, it runs in $O(m + n \log n)$ time which is worst-case optimal.
Dijkstra's algorithm (Algorithm~\ref{algo:dijkstra}) is conceptually simple: it initializes the heap with all vertices $v$ that $s$ can directly reach, where the key $\gamma(v)$ is the edge weight of $(s, v)$. Then, it  repeatedly pops the vertex $u$ with the smallest key and it appends $u$ to the output. 
Each time a vertex $u$ is popped, its neighbors $v$ are examined. For each neighbor $v$, the tentative distance $d_v$ is computed as the distance from $s$ to $u$ plus the edge weight $w(u, v)$. 
If $v$ is not in the heap, it is pushed with key $\gamma(v) = d_v$. 
If $v$ is already in the heap and $\gamma(v) > d_v$ then the key $\gamma(v)$ is decreased to $d_v$.

%\newpage
\subsection{Universally optimal Dijkstra}

Iacono~\cite{iacono2000improved} introduced a refined measure for the running time of heap operations via the concept of \emph{working-set size}:

\begin{definition}[\cite{iacono2000improved}~]
\label{def:working_set_size}
Consider a heap supporting \textsc{Push} and \textsc{Pop} and let $x_i$ be an element in the heap. 
For each time step $t$ between pushing $x_i$ and popping $x_i$, we define the working set of $x_i$ at time $t$ as the
set of elements $W(x_i,t)$ inserted after $x_i$ and still present at
time $t$. We include $x_i$ itself in $W(x_i,t)$.
Fix any time $t_0$ that maximizes the value of $|W(x_i,t)|$;
we call the set $W(x_i,t_0)$ \emph{working-set} of $x_i$ and $\phi(x_i) := |W(x_i,t_0)|$ the \emph{working-set size}.
\end{definition}

Haeupler, Hlad{\'\i}k,  Rozho{\v{n}}, Tarjan, and  T{\v{e}}tek ~\cite{Haeupler24} design a heap $H$ that supports \textsc{Push} and \textsc{DecreaseKey} in amortized $O(1)$ time, and supports \textsc{Pop} for an element $x \in H$ in $O(1 + \log \phi(x))$ time.
The basis of their data structure are Fibonacci heaps $F$ which support \textsc{Push} and \textsc{DecreaseKey} in  amortized constant time and \textsc{Pop} in amortized $O(\log N)$ time where $N$ is the number of elements in the heap. 

The core idea of their data structure is to partition elements into buckets $B_j$ of doubly exponentially increasing size. 
Each bucket $B_j$ contains a Fibonacci heap $F_j$. Denote by $m_j$ the minimum key of $F_j$. The set of $O(\log \log n)$ values $m_j$ is stored in a fusion tree~\cite{fredman1993surpassing}. Fusion trees of size at most $O(\log \log n)$ have constant update time in the word-RAM model.  
By carefully merging and managing buckets, they ensure that each element $x_i$ is stored in a Fibonacci heap $F_j$ whose size is proportional to $\phi(x_i)$. 
We refer to~\cite{Haeupler24} for full details; we only note here that ensuring that  \textsc{Push} and \textsc{DecreaseKey} take amortized constant time whilst \textsc{Pop} takes $O(1 + \log \phi(x_i))$ time requires non-trivial maintenance and analysis.

Using this heap, Dijkstra’s algorithm takes $O(m + \sum_{x_i \in V \setminus \{ s \} } (1 + \log \phi(x_i)))$ time. 
They~\cite{Haeupler24} also prove a universal lower bound: for any fixed (directed) graph topology $G = (V, E)$ and source $s \in V$, no algorithm $A'$ achieves runtime $o(m + n + \sum_{x_i \in V \setminus \{s \} } \log \phi(x_i))$ for all edge weightings $w \in \mathbb{W}(G)$. 
Dijkstra’s algorithm with this heap is therefore universally optimal.

\section{Timestamp optimal heaps}

Instead of designing heaps with a running time proportional to the working-set size of a heap element, we use a simple timestamp approach:
%use a concept that simpler to define and use:

\begin{definition}
    For any heap $H$, we define \emph{time} $t$ as a counter that increments after each  $H.\textsc{Push}(x, y)$. 
    For any element $x_i \in H$, we denote by $a_i$ the time when $x_i$ was pushed onto the heap, and by $b_i$ the time when $x_i$ was popped from the heap. Observe that $b_i > a_i$.
\end{definition}

\begin{definition}
    A heap $H$ is \emph{timestamp optimal} if it supports \textsc{Push} and \textsc{DecreaseKey} in amortized constant time, and \textsc{Pop} with output $x_i$ in amortized $O(1 + \log (b_i - a_i))$ time.
\end{definition}

\subparagraph{Our Data Structure}
Let $H$ denote our abstract heap data structure, supporting operations $H.\textsc{Push}(x, y)$, $H.\textsc{DecreaseKey}(x, y)$, and $H.\textsc{Pop}()$. We implement $H$ as follows: 

\begin{invariant}[Figure~\ref{fig:invariant}]
\label{inv:buckets}
    We partition the elements of $H$ into an array of buckets $B$.
    Each bucket $B[j]$ contains one or two Fibonacci heaps. Every element of $H$ lies in exactly one Fibonacci heap. 
    Along with each Fibonacci heap $F$ we store a half-open interval $I_F \subset \mathbb{R}$. We maintain the following invariants; for all times $t$:
    \begin{enumerate}[(a)]
        \item An element $x_i \in H$ is in $F$ if and only if $a_i \in I_F$,\label{item:one}
        \item The interval $I_F$ for $F \in B[j]$ has size $2^j$, \label{item:two}
        \item All intervals $F'$ of $B[j-1]$ lie to the right of the intervals in $F$ in $B[j]$, \label{item:three}
        \item All intervals collectively partition $[0, t)$. \label{item:four}
    \end{enumerate}
\end{invariant}

\begin{figure}[b]
    \centering
    \includegraphics[width = \linewidth]{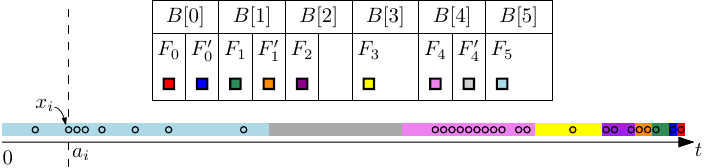}
    \caption{
    We illustrate our invariant. Each bucket $B[j]$ contains one or two Fibonacci heaps. Each heap $F$ is associated with an interval on the number line in $[0, t)$. 
The elements $x_i$ stored in a Fibonacci heap $F$ are visualised by placing them at positions $a_i$ along the number line. Importantly, when the buckets in $B$ are indexed from left to right (i.e., from low to high indices), the corresponding intervals of the Fibonacci heaps appear in the reverse order along the number line.
    }
    \label{fig:invariant}
\end{figure}

\noindent
By Invariant~\ref{inv:buckets}, for all elements $x_i \in B[j]$, $(t - a_i)$ is at least $2^{j-1}$ and at most $2^{j+1}$. We define $M$ as an array of size $\lceil \log n \rceil$, where $M[j]$ stores the minimum key in $B[j]$. 
Furthermore, we maintain a bitstring $S_M$ where $S_M[j] = 1$ if for all $k > j$, we have $M[j] \leq M[k]$. 

\begin{lemma}
    \label{lemm:get_bucket}
    For any element $x_i \in H$ we can, given $a_i$ and $t$, compute the bucket containing $x_i$ and the Fibonacci heap that contains $x_i$ in constant time.
\end{lemma}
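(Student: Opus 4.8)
The plan is to exploit two structural facts from Invariant~\ref{inv:buckets}: each Fibonacci heap in bucket $B[j]$ owns a half-open interval of length exactly $2^j$, and all of these intervals partition $[0,t)$ with the $B[j]$-intervals appearing in order of \emph{decreasing} $j$ along the number line. Combined with the fact that each bucket holds only $c_j\in\{1,2\}$ heaps, this pins the magnitude of $t-a_i$ to its bucket index: as already observed after the invariant, $x_i\in B[j]$ forces $2^{j-1}\le t-a_i\le 2^{j+1}$. Taking logarithms, if $v:=t-a_i$ and $\ell:=\lfloor\log_2 v\rfloor$ (note $v\ge 1$, since the global counter $t$ is incremented strictly after the push of $x_i$, so $\ell$ is well-defined), then the bucket index of $x_i$ is one of the three values $\ell-1,\ell,\ell+1$.

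Concretely, on a query I would first compute $v=t-a_i$ and then $\ell=\lfloor\log_2 v\rfloor$; in the word-RAM model the latter is the index of the most significant set bit of $v$, computable in $O(1)$ time. This restricts the search to the (at most) three buckets $B[\ell-1],B[\ell],B[\ell+1]$, discarding any index that is out of range. Each such bucket stores $O(1)$ pointers to its $\le 2$ Fibonacci heaps, and by Invariant~\ref{inv:buckets} each heap $F$ carries its interval $I_F$; so for each of the $\le 6$ candidate heaps I simply test whether $a_i\in I_F$. Since $a_i\in[0,t)$ and the intervals partition $[0,t)$, exactly one test succeeds by Invariant~\ref{inv:buckets}\eqref{item:four}, and I return that heap together with its bucket. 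The whole query is a constant number of arithmetic operations, comparisons, and array accesses, hence $O(1)$.

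The only genuinely non-elementary ingredient — and the step I expect to be the main obstacle — is the $O(1)$-time computation of $\lfloor\log_2 v\rfloor$ of a machine word; this is where the word-RAM assumption is used, but it is a standard primitive (a fusion-tree-style most-significant-bit query, a de Bruijn table lookup, or a dedicated instruction). Everything else is a direct consequence of the invariant: the length-$2^j$ intervals and the contiguous reversed-order partition of $[0,t)$ force $t-a_i=\Theta(2^j)$, which localizes the element to $O(1)$ buckets, after which the explicitly stored intervals $I_F$ resolve the exact Fibonacci heap. Note that the auxiliary structures $M$ and $S_M$ play no role here; they are used by \textsc{Pop} to locate the minimum key, not to locate a given element's heap.
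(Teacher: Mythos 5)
Your proof is correct and follows essentially the same route as the paper: derive $2^{j-1}\le t-a_i\le 2^{j+1}$ from Invariant~\ref{inv:buckets}(\ref{item:two}), set $\ell=\lfloor\log_2(t-a_i)\rfloor$ to narrow the search to $B[\ell-1],B[\ell],B[\ell+1]$, and test $a_i\in I_F$ for each of the $\le 6$ candidate heaps. The only addition is your explicit remark that the floor-log is an $O(1)$ word-RAM primitive, which the paper leaves implicit.
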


\begin{proof} Let $a_i$ be the insertion time of $x_i$. We now want to determine the possible values $j$ such that $B[j]$ contains $x_i$. From invariant 1.b we get: $t - a_i \ge 2^{j-1}$, or equivalently, $j \le \log(t - a_i) + 1$; and $t - a_i \le 2^{j+1}$, or equivalently, $j \ge \log(t - a_i) - 1$.
In particular, for $\ell = \lfloor \log(t - a_i) \rfloor$, the Fibonacci heap containing $x_i$ lies in one of the buckets $B[\ell-1], B[\ell], B[\ell+1]$.
For each of these at most $6$ corresponding intervals of the at most $6$ heaps contained in those $3$ buckets, we test, in constant time,  whether $a_i$ is contained in the interval. 
This procedure returns us the interval containing $a_i$, and thus, the heap (and bucket) containing $x_i$.
%
%The at most two Fibonacci heaps $F_1$ and $F_2$ in this bucket, have two disjoint intervals $I_{F_1}$ and $I_{F_2}$~\cref{inv:buckets}~(\ref{item:four})
%We test whether $a_i$ lies in one of these intervals in constant time. 
%By \cref{inv:buckets}~(\ref{item:one}), we then find the heap $F$ that contains $x_i$.
\end{proof}

\begin{theorem}
    The above data structure is a timestamp optimal heap $H$.
\end{theorem}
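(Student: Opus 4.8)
The plan is to check three things in order: that the three operations are implemented correctly, that \textsc{Push} and \textsc{DecreaseKey} take amortized $O(1)$ time, and that a \textsc{Pop} returning $x_i$ takes amortized $O(1+\log(b_i-a_i))$ time. Correctness should follow directly from Invariant~\ref{inv:buckets}: every element lies in exactly one Fibonacci heap, and by Lemma~\ref{lemm:get_bucket} we find that heap in $O(1)$ time from $a_i$ and the current time $t$, so \textsc{DecreaseKey} is simply forwarded there. The minimum key in $H$ is $\min_j M[j]$, and since all keys are distinct the least index $j^{*}$ with $S_M[j^{*}]=1$ is exactly the bucket attaining this minimum; as $S_M$ has $\lceil\log n\rceil$ bits and fits in $O(1)$ machine words, $j^{*}$ is read off as its lowest set bit in $O(1)$ time, and \textsc{Pop} then pops from whichever of the (at most two) Fibonacci heaps in $B[j^{*}]$ realizes the minimum.

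For \textsc{Push}$(x,z)$ I would proceed as with an increment of a binary counter. First create a fresh singleton Fibonacci heap holding $x$ with interval $[t,t{+}1)$ and append it as the rightmost heap of $B[0]$; this preserves all four parts of the invariant except that $B[0]$ may temporarily hold three heaps. The repair: while some $B[j]$ holds three heaps, \textsc{Meld} its two leftmost (oldest) heaps --- whose intervals are adjacent --- into a single heap whose interval is their union of size $2^{j+1}$, and move it to $B[j+1]$ as that bucket's rightmost heap; this leaves $B[j]$ with one heap and may leave $B[j+1]$ with three, so the merge cascades upward until some bucket holds at most two heaps. A key verification here is that the doubled interval lands correctly --- to the right of the intervals of $B[j+2]$ and to the left of the remaining interval of $B[j]$ --- so that parts (\ref{item:one})--(\ref{item:four}) are all restored. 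Each cascade step costs $O(1)$ work (Fibonacci \textsc{Meld} is $O(1)$, and the two affected minima are recomputed by \textsc{Peak}s) plus $O(j)$ to recompute $S_M$ on the affected prefix. The standard amortized binary-counter argument bounds the number of cascade steps per \textsc{Push} by $O(1)$; moreover the cascade reaches level $j$ only once per $\Omega(2^{j})$ pushes, so even the $O(j)$ prefix repairs sum to $O(1)$ amortized.

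The one genuinely delicate step --- and the one I expect to be the main obstacle --- is maintaining $S_M$ under a single change of an entry $M[j']$, because such a change can affect $S_M$ at every position $\le j'$, and for \textsc{DecreaseKey} the index $j'$ may be as large as $\lceil\log n\rceil$. I would control this with a potential $\Phi$ equal to a large constant times the number of ones in $S_M$. When \textsc{DecreaseKey} lowers $M[j']$ to $z$ (if $z\ge M[j']$ nothing changes), one checks that a position $j<j'$ with $S_M[j]=1$ stays a right-to-left minimum exactly when $M[j]\le z$; since the minima stored at ones-positions of $S_M$ strictly increase with the index, the destroyed ones occupy all bit positions of $S_M$ strictly between $j'$ and the largest surviving ones-position $p$, so we find $p$ by scanning the ones of $S_M\wedge(2^{j'}-1)$ downward and then clear the range $(p,j')$ of $S_M$ with one masked word operation; whether $S_M[j']$ becomes $1$ is decided in $O(1)$ by reading the next set bit of $S_M$ above $j'$. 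If $r$ bits of $S_M$ are cleared this costs $O(1+r)$ time but lowers $\Phi$ by at least $r-1$, so \textsc{DecreaseKey}, together with its $O(1)$ Fibonacci \textsc{DecreaseKey} and its $O(1)$ use of Lemma~\ref{lemm:get_bucket}, is amortized $O(1)$; the analogous bookkeeping handles the $S_M$ repairs inside \textsc{Push}.

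Finally I would bound a \textsc{Pop} returning $x_i$. Reading $j^{*}$ costs $O(1)$; by Invariant~\ref{inv:buckets}(\ref{item:two})--(\ref{item:three}) and the fact that $t=b_i$ at the moment of popping, $2^{j^{*}-1}\le b_i-a_i\le 2^{j^{*}+1}$, hence $j^{*}=\Theta(1+\log(b_i-a_i))$. The Fibonacci heap of $B[j^{*}]$ realizing the minimum holds at most $2^{j^{*}}$ elements --- its interval contains only $2^{j^{*}}$ integer timestamps --- so its \textsc{Pop} is amortized $O(\log 2^{j^{*}})=O(j^{*})$. Afterwards only $M[j^{*}]$ changes, and only upward, so only the prefix $S_M[0],\dots,S_M[j^{*}]$ is affected; I would recompute this prefix from scratch in $O(j^{*})$ time, which also dominates the resulting increase of $\Phi$, and then obtain the new minimum bucket as the lowest set bit of $S_M$ once more. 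Summing, a \textsc{Pop} returning $x_i$ runs in amortized $O(1+\log(b_i-a_i))$ time, which completes the proof that the data structure is a timestamp optimal heap.
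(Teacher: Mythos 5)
Your proposal is correct and follows essentially the same structure as the paper's proof: the same bucketed Fibonacci-heap layout, the same use of $S_M$ and its lowest set bit to locate the minimum, the same potential-style argument (charging cleared $1$-bits of $S_M$) for \textsc{DecreaseKey}, and the same $O(j^*)$ bound for \textsc{Pop} via Invariant~\ref{inv:buckets}. The one place you diverge is the $S_M$ maintenance inside \textsc{Push}: you recompute the full prefix $S_M[0..j{+}1]$ after a merge at level $j$, paying $O(j)$, and then amortize that via the fact that a merge at level $j$ happens only about once per $2^j$ pushes (since each merge at level $j$ consumes two level-$j$ heaps and there are at most $N/2^j$ such heaps over $N$ pushes). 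The paper instead observes that the merge preserves $\min(M[j],M[j{+}1])$, so only $S_M[j]$ and $S_M[j{+}1]$ can change, giving a worst-case $O(1)$ $S_M$ repair per merge with no frequency argument needed. Both routes give amortized $O(1)$; the paper's observation is a bit sharper and avoids the need to justify the ``once per $\Omega(2^j)$ pushes'' claim, but your version is also sound.
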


\begin{proof}
    We prove that the data structure supports all three heap operations within the specified time bounds while maintaining Invariant~\ref{inv:buckets}. 

\subparagraph{Push($x_i$, $y$).} To push $x_i$ onto $H$, we set $a_i = t$ and increment $t$.
    We then add a new Fibonacci heap $F$ to $B[0]$ with $I_F = [a_i, a_i+1)$. 
    Whilst a bucket $B[j]$ contains three Fibonacci heaps, we select its oldest two Fibonacci heaps $F_1, F_2$.
    We \textsc{Meld} $F_1$ and $F_2$ into a new Fibonacci heap $F'$ in worst-case constant time, and we move $F'$ to $B[j+1]$. We define the corresponding interval $I_{F'}$ as the union of the consecutive intervals $I_{F_1}$ and $I_{F_2}$. Since Property~(\ref{item:one}) held for both $F_1$ and $F_2$, it now holds for $F'$.
    Moreover, $|I_{F'}| = 2^{j+1}$ and so Property~(\ref{item:two}) is maintained.
    Since, before the update, all intervals collectively partitioned $[0, t)$ and we only added the interval $[t, t+1)$ we maintain Properties~(\ref{item:three}) and (\ref{item:four}). 
    Every \textsc{Meld} operation decreases the number of Fibonacci heaps
    and every push operation creates only one heap.
    Hence, each push operation performs amortized $O(1)$ \textsc{Meld} operations.

    Finally, we update $M$ and the string $S_M$. 
    After we have merged two heaps in a bucket $B[j]$, only $M[j]$ and $M[j+1]$ change. Moreover, $M[j]$ can only increase in value and $M[j+1]$ can only decrease to at least the original value of $M[j]$. 
    Therefore, all bits $S_M[k]$ for $k \not \in \{ j, j+1 \}$ remain unchanged and we update $S_M$ in amortized constant time. 
    Thus, a push operation takes amortized constant time.
    
\subparagraph{DecreaseKey($x_i$, $y$).}
    By Lemma~\ref{lemm:get_bucket}, we find the Fibonacci heap $F$ and the bucket $B[j]$ that contain  $x_i$ in  constant time. We perform $F$.\textsc{DecreaseKey}$(x_i, y)$ in amortized constant time. 
    We update $M[j]$ and identify the index $k$ of the leftmost $1$-bit  in $S_M$ after index $j$ in constant time using bitstring operations.
    Note that $S_M[j] = 1$ if and only if $M[j] \leq M[k]$.

    If $S_M[j] = 1$, then we recursively find the next $j' < j$ for which $S_M[j'] = 1$ using the same bitstring operation. 
    If $M[j] < M[j']$, then we set $S_M[j'] = 0$ and we recurse.
    If $M[j] \geq M[j']$ then for all $j'' < j'$ with  $S_M[j''] = 1$, $M[j] \geq M[j'']$ and our update terminates. 
    The recursive call takes amortized constant time since it can only decrease the number of $1$-bits in $S_M$.

\subparagraph{Pop().}
    We obtain the index $j$ of the leftmost $1$-bit of $S_M$ in constant time using bitstring operations. 
    Per definition, the minimum key of $H$ is in $B[j]$. 
    We perform the \textsc{Peak} operation on both heaps in $B[j]$. For the heap $F$ containing the minimum key, we do $F$.\textsc{Pop()} in $O(1 + \log 2^j) = O(j)$ time to return the element $x_i \in H$ with the minimum key. 
    Finally, we update $M[j]$ and $S_M$. We update $M[j]$ in constant time and iterate from $j$ to $1$. For each iteration $\ell$, let $k > \ell$ be the minimum index where $S_M[k] = 1$. 
    We obtain $k$ in constant time using bitstring operations and update $S_M[\ell]$ by comparing $M[\ell]$ to $M[k]$.  Thus, this update takes $O(j)$ time. 
    By Invariant~\ref{inv:buckets}, $\log (b_i - a_i) \in O(j)$ and thus the $H$.\textsc{Pop()} takes $O(1 + \log (b_i - a_i) )$ amortized time. 
\end{proof}

\begin{corollary}
\label{cor:runtime}
Algorithm~\ref{algo:dijkstra} can be adapted to use $O(m + \sum\limits_{x_i \in V \setminus \{s \} } (1 + \log (b_i - a_i))$ time. 
\end{corollary}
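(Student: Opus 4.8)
The plan is to combine the previous theorem (that our data structure is a timestamp optimal heap) with the structure of Algorithm~\ref{algo:dijkstra}, in essentially the same way one proves Dijkstra runs in $O(m + n\log n)$ with a Fibonacci heap. The only subtlety is the ``adapted'' in the statement: in Algorithm~\ref{algo:dijkstra} as written, a vertex may be \textsc{Push}ed with a tentative key and then \textsc{DecreaseKey}'d, but it is never \textsc{Pop}ped and reinserted, so each vertex $x_i \in V\setminus\{s\}$ is associated with exactly one pair $(a_i, b_i)$ once it is pushed. The adaptation needed is simply to replace the black-box heap $H$ by our timestamp optimal heap.

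First I would account for the operation counts. Each vertex is \textsc{Push}ed at most once and \textsc{Pop}ped at most once, so there are at most $n$ of each. Each edge $(u,v)$ is examined exactly once, when $u$ is popped, and triggers at most one \textsc{Push} or one \textsc{DecreaseKey}; hence at most $m$ \textsc{DecreaseKey} operations and the bookkeeping in the \texttt{distance} array and the \textsc{Output} list is $O(m + n)$ total. By the preceding theorem, \textsc{Push} and \textsc{DecreaseKey} cost amortized $O(1)$ each, contributing $O(m + n) = O(m)$ in total (using that $s$ reaches every vertex, so $m \ge n$). The \textsc{Pop} that outputs $x_i$ costs amortized $O(1 + \log(b_i - a_i))$, and summing over the $n$ vertices popped gives $O\!\pa{\sum_{x_i \in V\setminus\{s\}} (1 + \log(b_i - a_i))}$. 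Adding the two bounds yields the claimed running time; the amortized costs telescope correctly because the total number of heap operations is finite and the potential is bounded, exactly as in the standard Fibonacci-heap analysis.

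I would then remark that the quantity $b_i - a_i$ is well defined for each vertex precisely because of the adaptation: the heap never reinserts a vertex, so each $x_i$ has a unique push-time $a_i$ and pop-time $b_i$, and these are exactly the timestamps from the definition feeding into the theorem. One should also note $b_i > a_i$ always (a vertex is popped strictly after it is pushed, since the global counter increments on every push and at least the push of $x_i$ itself separates the two events), so each logarithmic term is nonnegative and the bound is meaningful.

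I do not expect a real obstacle here; the statement is a direct corollary. The one place to be slightly careful is that the $O(1)$ amortized bounds for \textsc{Push} and \textsc{DecreaseKey} and the $O(1 + \log(b_i - a_i))$ amortized bound for \textsc{Pop} are \emph{amortized}, so I would phrase the summation as a sum of amortized costs over the whole execution and invoke the standard fact that the sum of amortized costs upper-bounds the sum of actual costs (the initial potential is zero and the potential is always nonnegative). Everything else is the bookkeeping above.
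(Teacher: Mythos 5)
Your proof is correct and takes the standard route: count the heap operations Dijkstra makes ($\le n$ pushes, $\le n$ pops, $\le m$ decrease-keys), plug in the amortized bounds from the timestamp-optimal heap theorem, and sum. The paper gives no explicit proof of this corollary, treating it as immediate from the preceding theorem, so your argument is precisely the intended one.
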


\section{Universal Optimality}
\label{sec:universal}

We show that categorizing the running time of Dijkstra's algorithm by timestamps substantially simplifies its proof for universal optimality.
For a given graph $G = (V, E)$ and a designated source vertex $s$, we define a \emph{linearization} $L$ of $(G, s)$ as any permutation of $V \setminus \{s\}$ such that there exists some choice of edge weights $w' \in \mathbb{W}(G)$ where $L$ is the solution to the single-source-shortest path problem on $(G, s, w')$.
Let $\ell(G, s)$ denote the number of distinct linearizations of $(G, s)$.
Observe that $\Omega(\log \ell(G, s))$ is a comparison-based universal lower bound for the single-source shortest path problem.
Indeed, any comparison-based algorithm has a corresponding binary decision tree with leaves representing possible outputs.
With $\Omega(\ell(G, s))$ distinct outputs, there must exist a root-to-leaf path of length $\Omega(\log \ell(G, s))$.

\subparagraph{Our approach.}
We fix the input $(G = (V, E), s, w)$ and execute Dijkstra's algorithm, which defines a set of $n$ timestamp intervals $\{ [a_i, b_i] \}_{i \in [n]}$. Each interval $[a_i, b_i]$ has a minimum size of $1$ and corresponds to a unique vertex $x_i \in V \setminus \{s\}$. Dijkstra's algorithm operates in $O(m + \sum_{x_i \in V \setminus \{s\}} (1 + \log (b_i - a_i)))$ time. Since any algorithm must spend $\Omega(n + m)$ time to read the input, it remains to show that $\sum\limits_{x_i \in V \setminus \{s\}} (1  + \log (b_i - a_i)) \in \Omega(n + \log \ell(G, s))$.

\begin{lemma} \label{lemm:points_define_linearizations}
    For each vertex $x_i \in V \setminus \{s\}$, choose a real value $r_i \in [a_i, b_i]$. 
    If all $r_i$ are distinct, then the sequence $L$, which orders all $x_i$ by $r_i$, is a linearization of $(G, s)$. 
\end{lemma}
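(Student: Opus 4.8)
The plan is to construct, from the chosen values $r_i$, an explicit edge weighting $w' \in \mathbb{W}(G)$ whose Dijkstra execution produces exactly the order $L$. The natural candidate is to make the shortest-path distance from $s$ to $x_i$ equal to (a suitably rescaled copy of) $r_i$, so that the pop order of the new execution is precisely the sorted order of the $r_i$. Concretely, I would first note that since the original weights $w$ are positive and all distances from $s$ are distinct, the original pop order is $x_1, x_2, \ldots, x_n$ with $a_1 < a_2 < \cdots$ and $b_1 < b_2 < \cdots$ (each pop increments nothing, but each push does; still the pop times $b_i$ are strictly increasing in the order vertices leave the heap). The key structural fact I want is: $x_i$ is pushed onto the heap no later than the moment some in-neighbour of $x_i$ along its shortest-path tree edge is popped, and $x_i$ is popped after that; hence if $r_i \in [a_i,b_i]$ then $r_i$ lies ``between'' the activity windows in a way compatible with the graph's reachability constraints.

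The cleaner route, which I would actually pursue, is combinatorial rather than weight-construction-heavy: show that the order $L$ induced by sorting the $r_i$ is consistent with every constraint that any linearization must respect, and that the set of linearizations is exactly the set of topological-order-like sequences satisfying those constraints. For this I would use the following observation about intervals $[a_i,b_i]$ coming from a Dijkstra run: whenever there is an edge $(x_i, x_j) \in E$ that is ``used'' — i.e., relaxed so that $x_j$ enters or updates in the heap — we have $a_j \le b_i$, because $x_j$ is pushed (at the latest) when $x_i$ is popped. More importantly, for the shortest-path-tree parent $p(x_j)$ of $x_j$, the vertex $x_j$ is pushed at or before $b_{p(x_j)}$ and popped after it, giving $a_j \le b_{p(x_j)} < b_j$; and conversely for any vertex popped before $x_j$, its interval starts before $b_j$. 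I would distill from this the statement that the intervals, ordered by $b_i$, form a ``laminar-ish'' staircase: $a_1 \le a_2 \le \cdots$ need not hold, but $b_1 < b_2 < \cdots < b_n$ does, and crucially $a_{i+1} \le b_i$ for all $i$ (the next vertex to be popped was already in the heap, hence pushed, before the current pop — or pushed exactly at it; in either case its push time is $< b_i + 1$, and since push times are integers and pop times are integers between consecutive pushes... I need to be careful here, which is the main obstacle, see below). Granting $a_{i+1} \le b_i$, the intervals pairwise overlap consecutively, so for any choice of distinct $r_i \in [a_i, b_i]$ the resulting permutation is obtained from the identity $x_1, \ldots, x_n$ by only swapping elements whose intervals overlap, i.e., it is a permutation reachable by adjacent transpositions of overlapping intervals.

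Then I would close the argument by exhibiting a weighting realizing $L$. Order the vertices as $L = x_{\pi(1)}, \ldots, x_{\pi(n)}$. I would assign new distances $d'(x_{\pi(k)})$ to be any strictly increasing sequence, e.g. $d'(x_{\pi(k)}) = k$, and then define $w'((u,v))$ for an edge used by the shortest-path tree of $L$ (to be chosen) as $d'(v) - d'(u)$, and give all other edges a large weight (bigger than the diameter under $d'$). The obligation is that this is a valid positive weighting and that its Dijkstra solution is $L$, which requires $d'(v) > d'(u)$ whenever $(u,v)$ is a tree edge; this holds because I will choose the tree of $L$ to be exactly the original shortest-path tree, and I must verify $d'$ is increasing along original tree edges — equivalently, that $\pi^{-1}(\text{child}) > \pi^{-1}(\text{parent})$, i.e., every vertex appears after its parent in $L$. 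This is the crux: I must show that sorting by $r_i$ never places a child before its parent, which follows if $r_{p(x_j)} < r_j$ is forced — and that in turn follows from $b_{p(x_j)} \le $ something $\le a_j$... but that inequality is false in general (parent is popped before child, so $b_{p(x_j)} < b_j$, but the intervals can overlap).

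So the genuinely hard step — and the one I expect to wrestle with — is ruling out the ``child before parent'' possibility, and more generally proving that every permutation arising from the $r_i$ is actually realizable. I suspect the resolution is that the parent–child constraint is \emph{not} one of the constraints: if $r_j < r_{p(x_j)}$, we can still realize $L$ by choosing a \emph{different} shortest-path tree for the new weighting, rerouting $x_j$ through whichever earlier-in-$L$ in-neighbour it has (and $x_j$ must have such an in-neighbour with smaller $r$-value, because $a_j$ is at least the push time, which is at most the pop time of some in-neighbour, whose interval then starts $\le a_j \le r_j$... still not obviously giving a smaller $r$). The honest plan, therefore, is: (1) establish the consecutive-overlap property $a_{i+1} \le b_i$ rigorously from the heap semantics and the integrality of timestamps; (2) prove that the set of linearizations of $(G,s)$ equals the set of permutations of $V\setminus\{s\}$ obtainable by composing adjacent transpositions of overlapping intervals (one inclusion is this lemma, the other uses that any such permutation is itself a Dijkstra order for some weighting by a rescaling argument); (3) conclude the lemma as the easy inclusion, deferring the weight construction to wherever it is actually needed. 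I would expect step (1) and the careful handling of the $a_{i+1}\le b_i$ boundary case (push-at-the-moment-of-pop, and whether $t$ increments before or after the push) to be where the real care goes, and I would phrase the final statement's proof as: given distinct $r_i \in [a_i,b_i]$, build $w'$ by setting each distance to the rank of $x_i$ under the $r$-order scaled into the tiny windows allowed by the overlapping intervals, verify positivity edge-by-edge using $a_{i+1}\le b_i$, and check that Dijkstra on $(G,s,w')$ pops in $r$-order.

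\begin{proof}[Proof sketch to be expanded]
Let the original execution pop the vertices in the order $x_1, x_2, \dots, x_n$, so that $b_1 < b_2 < \dots < b_n$. The heap semantics give the key inequality $a_{i+1} \le b_i$ for every $i$: when $x_i$ is popped, the vertex $x_{i+1}$ is already present in the heap (it is the next to be popped and keys only decrease), hence it was pushed at or before time $b_i$. Consequently the intervals $[a_i, b_i]$ overlap consecutively along the line.

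Now fix distinct $r_i \in [a_i, b_i]$ and let $L$ order the $x_i$ by increasing $r_i$. Because consecutive intervals overlap, $L$ differs from the identity order only by transpositions of vertices whose intervals overlap. We construct $w' \in \mathbb{W}(G)$ realizing $L$ as follows. Let $\rho(x_i)$ be the rank of $x_i$ in $L$. Choose real targets $d'(x_i)$ strictly increasing in $\rho(x_i)$ and, crucially, with $d'(x_i)$ lying in a small neighbourhood of $r_i$ chosen so that $d'$ is increasing along the edges of the original shortest-path tree of $(G,s,w)$ — possible precisely because the overlap structure constrains how far $r$-order can deviate from pop-order. Set $w'((u,v)) = d'(v) - d'(u)$ for tree edges and $w'((u,v))$ larger than $\max_i d'(x_i)$ for all other edges; positivity on tree edges is exactly the monotonicity just arranged. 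Then the shortest-path distance from $s$ to $x_i$ under $w'$ is $d'(x_i)$, all distinct, and Dijkstra on $(G,s,w')$ pops the $x_i$ in increasing order of $d'(x_i)$, i.e., in the order $L$. Hence $L$ is a linearization of $(G,s)$.
\end{proof}
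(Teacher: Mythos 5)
Your proposal does not close the argument, and the gap is exactly where you predicted it would be — but the resolution you guessed at is not the one that works. You tried to route distances through the shortest-path tree and then struggled to show that the $r$-order never places a child before its parent; you correctly note that this fails, since in a shortest-path tree the parent can be popped after the child is pushed (because of \textsc{DecreaseKey}), so their intervals overlap and the $r$-values can be in either order. The paper sidesteps this entirely by using a \emph{different} tree $T$: the ``push tree,'' where the parent of $x_i$ is the unique vertex $x_j$ whose pop caused $x_i$ to be pushed for the first time. For this tree the timing is clean by construction: $x_j$ is popped \emph{before} $x_i$ is ever pushed, so $b_j \le a_i$, the intervals are disjoint (not merely overlapping), and therefore $r_j \le b_j \le a_i \le r_i$ with strictness from distinctness. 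Parent always precedes child in $L$, so assigning $w'((x_j,x_i)) = \textnormal{rank}(r_i) - \textnormal{rank}(r_j) > 0$ on tree edges and $\infty$ off-tree makes the $w'$-distance to $x_i$ telescope to $\textnormal{rank}(r_i)$, and Dijkstra on $w'$ outputs $L$. Your speculative fix (``reroute $x_j$ through whichever earlier-in-$L$ in-neighbour it has'') is heading in the right direction conceptually, but it requires proving that such an in-neighbour exists with a smaller $r$-value, which is precisely the statement $b_j \le a_i$ about the push parent; you never identify the push tree, so this step remains unproved in your write-up.

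There is also a secondary error: your claimed consecutive-overlap inequality $a_{i+1} \le b_i$ is false. The vertex $x_{i+1}$ need not be in the heap when $x_i$ is popped; it may be pushed only during the processing of $x_i$'s out-edges, and if it is not the first such push then $a_{i+1} > b_i$. This doesn't by itself doom your construction, but it means the ``laminar staircase'' structure you were leaning on does not hold, and the weight assignment you sketch (picking $d'(x_i)$ in a small neighbourhood of $r_i$ while staying monotone along the original shortest-path tree) is not justified. The paper's rank-based weights avoid both issues: monotonicity along $T$ is automatic from $b_j \le a_i$, and no local perturbation argument is needed.
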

\begin{proof}
  Consider the initial execution of Dijkstra's algorithm on $(G, s, w)$. For each vertex $x_i \in V \setminus \{s\}$, there is a unique edge $(u, x_i)$ that, upon inspection, pushes the element $x_i$ onto the heap $H$. The set of these edges forms a spanning tree $T$ rooted at $s$. 

 Denote by $\textnormal{rank}(r_i)$ the rank of $r_i$ in the ordered sequence $L$. 
We construct a new integer edge weighting $w'$ such that running Dijkstra's algorithm on $(G, s, w')$ outputs $L$. We set $w'(e) = \infty$ for all edges $e \in E \setminus T$. 
We then traverse $T$ in a breath-first manner.
For each edge $(x_j, x_i) \in T$, we observe that in our original run of Dijkstra's algorithm, the algorithm popped the vertex $x_j$ before it pushed the vertex $x_i$ onto the heap. It follows that for the intervals $\{ [a_j, b_j], [a_i, b_i] \}$, the value $b_j \leq a_i$. 
Since $r_j \in [a_j, b_j]$, $r_i \in [a_i, b_i]$, and $r_j \neq r_i$ we may conclude that $\textnormal{rank}(r_i) > (r_j)$. 
We assign to each edge $(x_j, x_i) \in T$ the positive edge weight $w'((x_j, x_i)) = \textnormal{rank}(r_i) - \textnormal{rank}(r_j)$. 
It follows that the distance in $(G, E, w')$ from $s$ to any vertex $x_i$ is equal to $\textnormal{rank}(r_i)$ and so Dijkstra's algorithm on $(G, s, w')$ outputs $L$. 
\end{proof}

What remains is to show that the number of ways to select distinct $r_i \in [a_i, b_i]$, correlates with the size of the timestamp intervals. This result was first demonstrated by Cardinal, Fiorini, Joret, Jungers, and Munro~\cite[the first 2 paragraphs of page 8 of the ArXiV version]{cardinal2010sorting}. It was later paraphrased in Lemma 4.3 in~\cite{haeupler2025fast}.

\begin{lemma} [Originally in~\cite{cardinal2010sorting}, paraphrased by Lemma 4.3 in~\cite{haeupler2025fast}] \label{lemm:intervals_to_linearizations}
    Let $\{ [a_i, b_i ] \}$ be a set of $n$ integer intervals where each interval has size at least $1$.   
    Let $Z$ be the set of linear orders realizable by real numbers $r_i \in [a_i, b_i]$.
    Then $ \sum\limits_{i \in [n]} \log( b_i - a_i) \in O(n + \log |Z|)$.
\end{lemma}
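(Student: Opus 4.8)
The plan is to prove the contrapositive-style bound directly: show that a large value of $\sum_i \log(b_i - a_i)$ forces many realizable linear orders, i.e. $\log|Z| \geq \frac{1}{c}\sum_i \log(b_i-a_i) - n$ for a suitable constant $c$. The natural approach is a \emph{greedy sweep} of the number line from left to right. I would process the integer points $0, 1, 2, \dots$ in order; at integer time $t$, consider the ``active'' intervals, namely those $[a_i, b_i]$ with $a_i \le t < b_i$ that have not yet been assigned a value. The idea is to repeatedly pick, among currently active unassigned intervals, the one with the smallest right endpoint $b_i$ and assign it the next available slot in the current unit segment; by always servicing the interval that is most urgent (earliest deadline first, a classic scheduling argument), one can guarantee that every interval eventually gets assigned some $r_i \in [a_i,b_i]$, and moreover that within each unit segment $[t,t+1)$ one has genuine freedom to permute the values assigned there.

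Concretely, I would fix a canonical assignment (earliest-deadline-first) that places each interval into a specific unit cell $[t_i, t_i+1)$. Let $k_t$ be the number of intervals assigned to cell $t$. Then $\prod_t k_t! \le |Z|$, since any permutation of the values within a single cell, applied independently across cells, yields a distinct realizable linear order (the relative order of values in different cells is forced, so these permutations act independently). Taking logs, $\log|Z| \ge \sum_t \log(k_t!) \ge \sum_t k_t \log k_t - O\big(\sum_t k_t\big) = \sum_t k_t\log k_t - O(n)$. The remaining task is to relate $\sum_t k_t \log k_t$ to $\sum_i \log(b_i - a_i)$. For this I would argue that an interval assigned to cell $t_i$ satisfies $b_i - a_i \le (\text{number of cells it spans})$, and more usefully, that the interval could only have been ``pushed'' as far right as $t_i$ because the cells $a_i, a_i+1, \dots, t_i$ were all saturated (each servicing an earliest-deadline interval at least as urgent as $x_i$). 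A charging argument then shows $\log(b_i - a_i) \le \log(t_i - a_i + 1) + O(1)$ and $\sum_i \log(t_i - a_i + 1) \le O\big(\sum_t k_t \log k_t + n\big)$, since spanning many cells forces those cells to be full.

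I expect the main obstacle to be exactly this last charging step: making precise why an interval that ends up far from its left endpoint forces a correspondingly large product $\prod k_t$ over the cells in between. The cleanest way I can see is an amortized/potential argument: when interval $x_i$ is carried past a full cell $t$, charge $\log$ of something to that cell; since cell $t$ holds $k_t$ intervals and can be ``passed over'' by at most... — one must be careful that a single cell is not overcharged by many long intervals. The fix is to observe that if cell $t$ is full with $k_t$ elements, then the intervals passing over it are among a bounded-depth nested family, and a standard convexity / entropy inequality ($\sum \log(\text{stretch}) \le O(\sum k_t \log k_t + n)$) closes the gap. Since the lemma is explicitly attributed to \cite{cardinal2010sorting} and paraphrased in \cite{haeupler2025fast}, I would in practice cite that argument and only sketch the earliest-deadline-first scheduling interpretation, which is the conceptually transparent route; the delicate convexity bookkeeping is where all the real work lives, and it is worth isolating as a self-contained counting claim about nested integer intervals before assembling the pieces.
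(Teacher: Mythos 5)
Your approach is genuinely different from the paper's, and unfortunately the route you sketch has a fatal gap. The paper's proof is a short probabilistic counting argument: draw each $r_i$ uniformly at random from $[0,n]$, observe that the induced permutation is uniform over all $n!$ orders, and that the event ``all $r_i\in[a_i,b_i]$'' (which has probability $\prod_i (b_i-a_i)/n$) implies the permutation lies in $Z$. Hence $|Z|\ge n!\prod_i (b_i-a_i)/n$, and Stirling finishes it. No scheduling, no per-cell bookkeeping.

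The gap in your proposal is the inequality you need at the end, namely $\sum_i \log(b_i-a_i) \le O\bigl(n + \sum_t k_t\log k_t\bigr)$ where $k_t$ counts intervals assigned by your EDF rule to cell $[t,t+1)$. This is false. Take $n$ identical intervals $[0,n]$: then $\sum_i\log(b_i-a_i)=n\log n$, but EDF assigns exactly one interval to each of the $n$ cells, so $k_t=1$ for all $t$ and $\sum_t k_t\log k_t = 0$, forcing $n\log n \le O(n)$. The deeper reason is that your lower bound $|Z|\ge\prod_t k_t!$ captures only within-cell permutations and misses the (often dominant) freedom to move values across cells; in this example $|Z|=n!$ while $\prod_t k_t!=1$. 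No charging or convexity repair of the last step can rescue it, because the intermediate quantity $\prod_t k_t!$ is already too small to certify $\log|Z|$. The probabilistic argument sidesteps this entirely by comparing directly against all $n!$ orders rather than against a structured subfamily.
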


\begin{proof}
    To illustrate the simplicity of this lemma, we quote the proof in \cite{haeupler2025fast}:
    For each $i$ between $1$ and $n$ inclusive, choose a real number $r_i$ uniformly at random from the real interval $[0, n]$, independently for each $i$. With probability $1$, the $r_i$ are distinct. 
    Let $L$ be the permutation of $[n]$ obtained by sorting $[n]$ by $r_i$. Each possible permutation is equally likely.
    If each $r_i \in [a_i, b_i]$ then $L$ is in $Z$.
    The probability of this happening is $\prod_{i=1}^k (b_i-a_i)/n$.  It follows that $|Z| \geq n! \cdot \prod_{i=1}^n (b_i-a_i)/n$.  Taking logarithms gives $\log |Z| \geq \sum_{i=1}^n \log (b_i-a_i) + \log n! - n\log n$.  By Stirling's approximation of the factorial, $\log n! \geq n\log n - n\log e$.  The lemma follows.   
\end{proof}

It follows that we can recreate the main theorem from~\cite{Haeupler24}:

\begin{theorem}[Theorem~1.2 in~\cite{Haeupler24}]
    Dijkstra's algorithm, when using a sufficiently efficient heap data structure, has a universally optimal running time. 
\end{theorem}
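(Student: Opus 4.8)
The plan is to assemble the three ingredients already developed in the excerpt. First, Corollary~\ref{cor:runtime} gives an upper bound: running Dijkstra's algorithm with the timestamp optimal heap on any input $(G,s,w)$ costs $O(m + \sum_{x_i \in V \setminus \{s\}} (1 + \log(b_i - a_i)))$, where the intervals $\{[a_i,b_i]\}$ are those induced by this particular execution. Since $\sum_{x_i}(1) = n$ and reading the graph already costs $\Omega(n+m)$, the running time is $O(m + \sum_{x_i} \log(b_i - a_i))$ up to the unavoidable $\Theta(n+m)$ term. The timestamps $a_i, b_i$ are integers with $b_i > a_i$, hence each interval has size at least $1$, so the hypotheses of Lemma~\ref{lemm:intervals_to_linearizations} are met.

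Next, I would establish the matching lower bound for this fixed topology $G$ and source $s$. Lemma~\ref{lemm:points_define_linearizations} shows that every choice of distinct reals $r_i \in [a_i, b_i]$ yields a genuine linearization of $(G,s)$; therefore the set $Z$ of linear orders realizable by such $r_i$ satisfies $Z \subseteq \{\text{linearizations of }(G,s)\}$, so $\log |Z| \le \log \ell(G,s)$. Now apply Lemma~\ref{lemm:intervals_to_linearizations} to the integer intervals $\{[a_i,b_i]\}$ arising from the execution of Dijkstra on $(G,s,w)$: it gives $\sum_{x_i} \log(b_i - a_i) \in O(n + \log |Z|) \subseteq O(n + \log \ell(G,s))$. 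Combining with the upper bound, Dijkstra's running time on $(G,s,w)$ is $O(m + n + \log \ell(G,s))$, and since this holds for every weighting $w \in \mathbb{W}(G)$, we get $\textnormal{Universal}(A, G, s) \in O(m + n + \log \ell(G,s))$ for our algorithm $A$.

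Finally I would invoke the comparison-based lower bound noted just before Lemma~\ref{lemm:points_define_linearizations}: any algorithm solving SSSP on $(G,s)$ over all weightings must, in the worst case, distinguish among the $\ell(G,s)$ possible outputs, forcing a decision path of length $\Omega(\log \ell(G,s))$; and trivially any algorithm must spend $\Omega(n+m)$ time reading the input. Hence $\textnormal{Universal}(A', G, s) \in \Omega(m + n + \log \ell(G,s))$ for every algorithm $A'$ and every fixed $(G,s)$. Together with the upper bound for our $A$, this shows $\textnormal{Universal}(A, G, s) = O(\textnormal{Universal}(A', G, s))$ for all $A'$, which is precisely the definition of universal optimality.

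The main obstacle — or rather the one point needing care — is the direction of the inclusion between the realizable orders $Z$ and the linearizations, and making sure the intervals fed to Lemma~\ref{lemm:intervals_to_linearizations} are exactly those that Corollary~\ref{cor:runtime} uses to bound the cost; both refer to the same single execution of Dijkstra on $(G,s,w)$, so the bookkeeping matches, but one must emphasize that the $r_i \mapsto$ linearization map of Lemma~\ref{lemm:points_define_linearizations} shows $|Z|$ is a \emph{lower} bound on $\ell(G,s)$ while we need it as an \emph{upper} bound on the quantity $\sum \log(b_i-a_i)$ — the logic chains correctly because $\sum \log(b_i - a_i) = O(n + \log|Z|) = O(n + \log \ell(G,s))$. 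A secondary subtlety is that universal optimality is stated with respect to a fixed constant $c$ uniform over all topologies; since every step above hides only absolute constants independent of $G$, $s$, and $w$, this is automatic.
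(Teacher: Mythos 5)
Your proof follows the same route as the paper: combine Corollary~\ref{cor:runtime} for the upper bound, Lemma~\ref{lemm:intervals_to_linearizations} to relate $\sum \log(b_i - a_i)$ to $\log|Z|$, Lemma~\ref{lemm:points_define_linearizations} to bound $|Z| \le \ell(G,s)$, and then invoke the $\Omega(m+n)$ input-reading and $\Omega(\log \ell(G,s))$ information-theoretic lower bounds. The extra remarks you make about the direction of the inclusion $Z \subseteq \{\text{linearizations}\}$ and about the uniformity of hidden constants are correct and helpful clarifications, but the argument is the same.
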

\begin{proof}
    The running time of Dijkstra's algorithm with the input $(G, s, w)$ is $O(m)$ plus the time needed for push and pop operations.
    With a timestamp-optimal heap, by Corollary~\ref{cor:runtime}, the algorithm thus takes $O(m + \sum\limits_{x_i \in V \setminus \{s \} } (1 + \log (b_i - a_i)))$ total time. 
    By \cref{lemm:intervals_to_linearizations}, the total running time is then $O(m + n + \log |Z|)$, where $|Z|$ denotes the number of distinct linear orders created by choosing distinct real values $r_i \in [a_i, b_i]$. 
    By Lemma~\ref{lemm:points_define_linearizations}: $|Z| \leq \ell(G, s)$, and so  Dijkstra's algorithm with any timestamp-optimal heap runs in $O(m + n + \log \ell(G, s))$ time.
    Conversely, any algorithm that solves SSSP needs $\Omega(m + n)$ time to read the input,
    and $\Omega(\log \ell(G, s))$ is an information-theoretical universal lower bound.
\end{proof}

\section{Conclusion}
We study Dijkstra's algorithm for the single-source-shortest path problem and give an alternative proof that Dijkstra's algorithm is universally optimal. 
We consider our construction to be considerably simpler than the one in~\cite{Haeupler24}.
We find time stamps easier to define than the working set size of an element, and regard our timestamp optimal heap to be simpler in both its construction and analysis than the heap used in~\cite{haeupler2025fast}. As an added benefit, we only rely on Fibonacci trees and bitstring manipulation and no involved data structures (e.g., the \emph{fusion trees} used in~\cite{Haeupler24}). 
We regard Section~\ref{sec:universal} as our main contribution, as it gives a very concise proof of universal optimality. 
We understand that algorithmic simplicity is subjective, and we leave it to the reader to judge the simplicity of our approach.
We note, however, that proofs are unarguably shorter since the full version of~\cite{Haeupler24} counts over 50 pages. 
We observe that both our paper and~\cite{Haeupler24} must assume a word RAM and we consider it an interesting open problem to show universal optimality on a pointer machine. 

Finally, we note that our data structure is equally general to~\cite{Haeupler24}. In Appendix~\ref{app:comparisons}, we show that our data structure, just as the one in~\cite{Haeupler24}, is also universally optimal if algorithmic running time is exclusively measured by the number of edge weight comparisons that the algorithm performs. 
While our proof is indeed new, different, and perhaps simpler than the previous one, it is not self-contained; for this reason, we relegate it to an appendix.

\bibliographystyle{plainurl}
\bibliography{refs}

\appendix

\section{Universal optimality with regards to number of comparisons}
\label{app:comparisons}

Haeupler, Hlad{\'\i}k,  Rozho{\v{n}}, Tarjan, and  T{\v{e}}tek~\cite{Haeupler24} propose an alternative way to measure algorithmic running time where one counts \emph{only} the edge weight comparisons made by the algorithm (all other algorithmic instructions are considered `free'). 
They prove that Dijkstra's algorithm is not universally optimal when algorithmic running time is defined by this measure. 
They then modify Dijkstra's algorithm to obtain an algorithm that \emph{is} universally optimal under this runtime measure. 

Here, we prove that our data structure and algorithmic analysis are equally general when compared to those of~\cite{Haeupler24}: we prove that Dijkstra's algorithm under our heap can also be adapted to become universally optimal under this runtime measure.
To this end, we fix the input $(G, s, w)$ of our algorithm and define two additional key concepts:

\begin{definition}
      Consider Dijkstra's algorithm on $(G, s, w)$. For each vertex $x_i \in V \setminus \{s\}$, there is a unique edge $(u, x_i)$ that, upon inspection, pushes the element $x_i$ onto the heap $H$. We define $T$ as the (directed) spanning tree, rooted at $s$, that is formed by these edges. 
\end{definition}

\begin{definition}
    A (directed) edge $(u, v)$ of $G$ is a \emph{forward edge} if
$\distance(s, v) > \distance(s, u)$ and $(v, u)$ is not in $T$. We define $m_f$ as the number of forward edges.
\end{definition}

%\begin{theorem} \label{theo:dijkstra_comparisons}
%    \cref{algo:dijkstra} performs $O(m_f + n + \log \ell(G))$ comparisons.
%\end{theorem}
%\begin{proof}
 %   \cref{algo:dijkstra} performs amortized $O(n + \log \ell(G))$ comparisons during     push or pop operations. In addition, the algorithm perform $O(1)$ comparisons     whenever it processes an edge $(u, v)$ that satisfies the condition ``$v$ is in $H$''.     Each such edge satisfies $\distance(s, v) > \distance(s, u)$,     meaning it is a forward edge or an edge of the exploration tree.     Hence, there are $m_r + (n-1)$ such edges,     so the algorithm performs $O(m_f + n)$ additional comparisons.
%\end{proof}

\noindent
We modify the algorithm to use $O(m_f + \sum\limits_{x_i \in V \setminus \{ s \}} (1 + \log (b_i - a_i)))$ edge weight comparisons.

\subparagraph{Bottleneck compression.} Consider the unweighted (di)graph $G$ and a breath-first search procedure on $G$ from $s$.
Let $d$ be one plus the maximum hop-distance from $s$ to any other vertex. 
We denote by $L_0, \dots, L_{d-1}$ the breath-first-search (BFS) levels of $G$.
That is, $L_i$ contains the vertices whose hop-distance distance from $s$ is $i$.
If $L_i$ consists of a single vertex, then we call that vertex a \emph{bottleneck vertex}.
In Dijkstra's algorithm, regardless of the choice in edge weights $w$, the bottleneck vertices always get processed in order. That is,
the $i$th bottleneck vertex gets pushed after the $(i-1)$st bottleneck vertex gets popped,
A \emph{bottleneck path} is a maximal sequence $u_1, \dots, u_k$
of bottleneck vertices that occupy consecutive levels.

\begin{observation}
    Each bottleneck vertex is part of a unique bottleneck path. 
    For each bottleneck path $u_1, \dots, u_k$ we always have that $(u_{i-1}, u_i) \in T$. For all edge weightings $w$, we have $d(s, u_i) = d(s, u_{i-1}) + w((u_{i-1}, u_i))$.
\end{observation}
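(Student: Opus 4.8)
The plan is to establish the three claims of the observation in sequence, working directly from the definition of bottleneck vertices/paths and the behaviour of Dijkstra's algorithm.

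First I would argue that each bottleneck vertex lies on a unique bottleneck path. A bottleneck path is defined as a \emph{maximal} sequence of bottleneck vertices occupying consecutive BFS levels $L_i, L_{i+1}, \dots$. Since a bottleneck vertex $v$ occupies a unique level $L_i$ (its hop-distance from $s$ is well-defined), the maximal run of consecutive levels that are all singletons and that contains $L_i$ is uniquely determined: extend downward through $L_{i-1}, L_{i-2}, \dots$ as long as each is a singleton, and upward through $L_{i+1}, L_{i+2}, \dots$ likewise. This maximal interval of levels is an invariant of $v$, hence the bottleneck path through $v$ is unique. This step is essentially a bookkeeping argument and should be short.

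Next I would show that for a bottleneck path $u_1, \dots, u_k$ with $u_j \in L_{i_0+j-1}$, the edge $(u_{j-1}, u_j)$ is the tree edge in $T$ that pushes $u_j$. Here the key fact is that in the unweighted graph, every edge of $G$ goes between consecutive levels or within a level (no edge skips a level), so the only in-neighbours of $u_j$ in $L_{i_0+j-2}$ is $u_{j-1}$ itself, since that level is a singleton. In Dijkstra's algorithm, a vertex $x_i$ is first pushed when the edge $(u, x_i)$ from its BFS parent (more precisely, from the vertex that discovers it) is inspected; the discovering vertex must lie in level $L_{i_0+j-2}$ — because $\distance(s,\cdot)$ respects BFS levels when all weights are positive, popping happens in distance order, and $u_j$ cannot be reached before some vertex at hop-distance $i_0+j-2$ is popped. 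Since that level contains only $u_{j-1}$, the pushing edge is $(u_{j-1}, u_j)$, so $(u_{j-1}, u_j) \in T$ by definition of $T$.

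Finally, the distance equation $d(s,u_j) = d(s,u_{j-1}) + w((u_{j-1}, u_j))$ follows from the standard correctness of Dijkstra/shortest paths together with the previous step: since $(u_{j-1}, u_j) \in T$ is the edge that sets $\distance[u_j]$ to its final value, and a tree edge in $T$ always achieves the true shortest-path distance (the relaxation that pushed $u_j$ was along $(u_{j-1},u_j)$ and no later \textsc{DecreaseKey} fires on $u_j$ because there is no other in-neighbour at the previous level, and any in-neighbour within the same or a later level is popped strictly after $u_j$), we get that the shortest $s$–$u_j$ path is the shortest $s$–$u_{j-1}$ path extended by the single edge $(u_{j-1}, u_j)$.

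I expect the main obstacle to be the second step — carefully justifying that the vertex that first pushes $u_j$ must lie in the previous BFS level. One must rule out the possibility that $u_j$ is pushed by an edge from within its own level $L_{i_0+j-1}$ or, more subtly, that the argument conflates hop-distance with weighted distance. The clean way around this is: an edge from level $L_{i_0+j-1}$ to $u_j \in L_{i_0+j-1}$ cannot be the discovering edge because any vertex in that level is popped only after some vertex in $L_{i_0+j-2}$ is popped (hop-distance is a lower bound on which pops can occur first in the sense of BFS reachability), and $u_{j-1}$ — the unique vertex of $L_{i_0+j-2}$ — has an edge to $u_j$ that is inspected as soon as $u_{j-1}$ is popped, hence strictly before any same-level edge into $u_j$ could be. Making this ordering argument precise is the crux; everything else is routine.
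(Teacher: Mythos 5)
The paper states this as an observation without providing a proof, so there is no paper proof to compare against line by line; what follows is an assessment of whether your argument is sound.

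Your overall strategy is right: the claims are topological facts that follow because a singleton BFS level is a vertex cut separating $s$ from everything at that level or later. Specifically, if $L_\ell = \{u\}$ is a singleton level, then every $s$-to-$v$ path for any vertex $v$ at level $\geq \ell$ must visit $u$, since the sequence of levels along a path starts at $0$ and never increases by more than $1$ per step, so it must hit level $\ell$ exactly, and the only vertex there is $u$. This cut property is independent of the weighting $w$ (it is purely about the unweighted topology), which is why the observation can assert its claims for \emph{all} edge weightings. Both the claim $(u_{i-1}, u_i) \in T$ and the distance recurrence drop out of this one fact, and your writeup does reach them.

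That said, two of your intermediate statements are imprecise in ways that matter for directed graphs, which the paper explicitly allows. First, you assert that ``in the unweighted graph, every edge of $G$ goes between consecutive levels or within a level (no edge skips a level).'' In a digraph this is false: an edge can go backwards across arbitrarily many levels. The fact you actually need and can legitimately use is one-directional: an edge $(v, u_j)$ cannot skip \emph{forward} more than one level, so $\textnormal{level}(v) \geq \textnormal{level}(u_j) - 1$. Second, your justification for why $u_j$ cannot be discovered from a same-level or later-level in-neighbour is phrased in terms of ``popping happens in distance order'' and ``hop-distance is a lower bound'', which risks exactly the hop-distance/weighted-distance conflation you flag yourself. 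The clean version is: since $L_{\textnormal{level}(u_j)} = \{u_j\}$ is a cut, any in-neighbour $v$ of $u_j$ at level $\geq \textnormal{level}(u_j)$ has $d(s, v) > d(s, u_j)$ for every weighting $w$, so $v$ is popped strictly after $u_j$ and $(v, u_j)$ can never be the pushing edge nor trigger a \textsc{DecreaseKey} on $u_j$. Combined with the forward-skipping bound, the only remaining candidate in-neighbour is the unique vertex $u_{j-1}$ at level $\textnormal{level}(u_j)-1$, which gives both $(u_{j-1},u_j)\in T$ and (since the shortest $s$-to-$u_j$ path must enter $u_j$ via $u_{j-1}$) the recurrence $d(s,u_j)=d(s,u_{j-1})+w((u_{j-1},u_j))$. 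With these two fixes your proof is solid.
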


\subparagraph{Dijkstra with compressed bottleneck paths.} We now modify \cref{algo:dijkstra}.
As a preprocessing step, we run breath-first-search on the unweighted graph $G$ (thereby performing no edge weights comparisons). We compute the levels $L_i$
and store each bottleneck path in an array. For every bottleneck path $u_1, \dots, u_k$,
we replace any outgoing edge $(u_i, v)$, by an edge $(u_1, v)$ of equivalent
weight $w'(u_1, v) = w(u_1, u_2) + \dots + w(u_{i-1}, u_i) + w(u_i, v) $.

After this preprocessing, we run Dijkstra's algorithm on the input $(G', s, w')$. 
When the modified algorithm pops a vertex $u_i$ in a bottleneck path $u_1, \dots, u_k$, it does not immediately push $u_{i+1}$.
Instead, let $w$ be the new minimum in the heap.
Starting from $i$, we perform exponential search over the array that stores the bottleneck path $u_1, \dots, u_k$ to find the maximum $j$ with
$d(s, u_j) < w$. We replace $u_i$ with $u_j$.
This way, we can simulate the sequence \textsc{pop}$(u_i)$, \textsc{push}$(u_{i+1})$, \textsc{pop}$(u_{i+1})$, \textsc{push}$(u_{i+2})$, \dots, \textsc{pop}$(u_j)$
that would be performed by \cref{algo:dijkstra} using $O(1 + \log(j-i))$ edge weight comparisons.
This yields \cref{algo:dijkstra_compressed}.

\begin{algorithm}
    \begin{algorithmic}
        \Require (Di)graph $G = (V, E)$, weighting $w$, designated source $s \in V$
        \State Run BFS on the unweighted $G$ starting at $s$, computing all bottleneck paths.
        \For{Every bottleneck path $u_1, \dots u_k$}
            \State Replace any edge $(u_i, v)$ with $i \ge 2$ by an equivalent edge $(u_1, v)$
        \EndFor
        \State Initialize H and $\distance[]$ as in \cref{algo:dijkstra}
        \While{$H$ is not empty}
            \State u $\gets$ H.\textsc{Pop}()
            \State \textsc{Output}.append($u$)
            \If{$u$ is a vertex $u_i$ of a bottleneck path $u_1, \dots, u_k$}
                \State w $\gets$ H.\textsc{get-min}$()$
                \State Exponential search from $i$ to the maximal $j$ with $\distance[u_j] < \distance[w]$
                \State \textsc{Output}.append($u_{i+1}$, $\ldots$ , $u_{j-1}$, $u_j$)
                \State u $\gets u_j$
            \EndIf
            \State Process the edges $(u, v)$ as in \cref{algo:dijkstra}
        \EndWhile
    \end{algorithmic}
    \caption{Dijkstra's algorithm with compressed bottleneck paths.}
    \label{algo:dijkstra_compressed}
\end{algorithm}

%\newpage
\subparagraph{Analysis.} For the analysis, we will consider the timestamp intervals $\{[a_i, b_i]\}_{i \in [n]}$
as defined by the unmodified \cref{algo:dijkstra}. 
Recall that by Lemma~\ref{lemm:points_define_linearizations}, any choice of $n$ distinct real values $\{ r_i \}$ with $r_i \in [a_i, b_i]$ corresponds to a linearization of $(G, s)$.
Section~\ref{sec:universal}, we used a result by Cardinal, Fiorini, Joret, Jungers, and Munro who note that if $|Z|$ denotes the number of distinct linearizations that can be obtained in this manner then $ \sum\limits_{i \in [n]} \log( b_i - a_i) \in O(n + \log |Z|)$.
We observe that recently, van der Hoog, Rotenberg and Rutchmann~\cite{Hoog2025Simpler} observe a tighter upper bound for both the number of bottleneck vertices and the sum of interval sizes:

\begin{lemma} [Lemma-2.1 in~\cite{Hoog2025Simpler}, a strengthening of Lemma~5 in \cite{cardinal2010sorting}]\label{lemm:long_path_lower_bound}
    Let $G$ be an unweighted graph with $n+1$ vertices. Consider performing BFS over $G$ from a designated source vertex $s$ and suppose that this procedure partitions the vertices of $G$ into $d$ levels. Then $(n-d) \in O(\log \ell(G, s))$.
\end{lemma}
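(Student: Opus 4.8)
The plan is to prove the statement by exhibiting many distinct linearizations whenever the BFS has few levels, so that $\log\ell(G,s)$ must be large. The key quantity to control is $n-d$: each BFS level $L_i$ contributes $|L_i|-1$ to the sum $\sum_i(|L_i|-1) = (n+1) - d = n-d+1$, so it suffices to show that the product $\prod_i |L_i|!$ divides the number of linearizations (or at least lower-bounds it up to a fixed exponential factor), and then take logarithms and apply Stirling. Concretely, I would first observe that for \emph{any} edge weighting, Dijkstra processes the vertices level by level: every vertex of $L_i$ is popped before any vertex of $L_{i+1}$ is pushed, because a shortest path to a vertex in $L_{i+1}$ must pass through $L_i$ and all weights are positive. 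Hence every linearization respects the level order, i.e.\ it is a concatenation of permutations of $L_0, L_1, \dots, L_{d-1}$.

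Next I would show the converse direction I actually need: that \emph{every} ordering that is a concatenation of arbitrary permutations of the levels is realizable, i.e.\ is a linearization. Fix a BFS tree $T$ rooted at $s$. Given any target order $\pi$ refining the level partition, assign integer weights so that the distance from $s$ to a vertex $v \in L_i$ equals $(n+1)\cdot i + (\text{rank of } v \text{ within the block for } L_i)$: set tree-edge weights to realize these distances (a parent in $L_{i-1}$ at distance $D_{\mathrm{par}}$ and child in $L_i$ at target distance $D_v$ get edge weight $D_v - D_{\mathrm{par}} > 0$, which is positive since the level-$i$ distances all exceed the level-$(i-1)$ ones by the $(n+1)$ gap), and set all non-tree edges to weight $+\infty$ (or simply very large). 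This is the same device as in Lemma~\ref{lemm:points_define_linearizations}. Then Dijkstra on this weighting outputs exactly $\pi$. Therefore $\ell(G,s) \ge \prod_{i=0}^{d-1} |L_i|!$.

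Taking logarithms, $\log \ell(G,s) \ge \sum_{i=0}^{d-1} \log(|L_i|!)$. Using $\log(k!) \ge k\log k - k\log e \ge (k-1) - (\text{something})$ — more carefully, $\log(k!) \ge c\,(k-1)$ for an absolute constant $c>0$ for all integers $k \ge 1$ (this is an easy check: it holds trivially for $k=1,2$ and then $k! \ge 2^{k-1}$ gives $\log(k!) \ge k-1$) — we get $\log\ell(G,s) \ge c\sum_{i=0}^{d-1}(|L_i|-1) = c\bigl((n+1)-d\bigr) = c\,(n-d+1)$, which yields $n - d \in O(\log\ell(G,s))$ as desired. (If one wants to be careful that $\ell(G,s) \ge 1$ always, note $n-d \ge 0$ so the bound is only interesting when $n > d$, in which case some level has at least two vertices and the sum is at least $1$.)

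The main obstacle is the second step: verifying that every level-respecting permutation is genuinely realizable as a Dijkstra output. One must be careful that the chosen weights on tree edges are strictly positive and that the $+\infty$ (large) weights on non-tree edges do not accidentally create a shorter alternative path that would reorder the vertices — this is handled by the uniform $(n+1)$ gap between consecutive levels, which guarantees the tree distances are increasing across levels and that no non-tree edge can help. A secondary, purely routine point is the elementary inequality $\log(k!) \ge k-1$; I would state it inline rather than belabor it. Everything else (the level-by-level processing of Dijkstra, summing $|L_i|-1$ to $n-d+1$) is immediate.
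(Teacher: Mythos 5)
Your proof is correct and takes essentially the same route as the paper: count the level-respecting permutations $\prod_i |L_i|!$, argue each is realizable, and lower-bound via $k! \ge 2^{k-1}$. Where the paper dismisses realizability with ``we can trivially select the edge weights,'' you spell out the explicit weight assignment (a $(n{+}1)$-gap between levels plus a within-level rank on the BFS tree, $+\infty$ elsewhere), which is the natural way to make the paper's claim precise and is exactly the device of Lemma~\ref{lemm:points_define_linearizations}. You are also more careful with the count $\sum_i(|L_i|-1)=(n{+}1)-d$; the paper writes $n-d$, but the constant offset is immaterial to the $O(\cdot)$ conclusion.

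One flaw worth correcting, even though it does not affect the validity of your argument: your opening ``observation'' --- that for every edge weighting Dijkstra processes the vertices level by level, so that every linearization respects the BFS level order --- is false. Take $s\to a$ with weight $1$, $s\to b$ with weight $100$, and $a\to c$ with weight $1$; then Dijkstra outputs $a, c, b$, so $c\in L_2$ is popped before $b\in L_1$, and indeed $c$ is \emph{pushed} before $b$ is popped. BFS levels bound hop-distance, not weighted distance, so no such monotonicity holds. Fortunately you explicitly note that you ``actually need'' the converse direction (every level-respecting order is realizable), and that direction you argue correctly; the false forward claim is an unused tangent. You should simply delete it --- it is not needed for the lower bound $\ell(G,s)\ge\prod_i|L_i|!$, and leaving a false statement in a proof invites confusion.
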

\begin{proof}
    For completeness, we paraphrase the proof from~\cite{Hoog2025Simpler}.
    Since each level $L_i$ must contain at least one vertex, and the levels partition all $n$ vertices, it follows that $\sum_{i = 0}^{d-1} (|L_i| - 1 ) =  n - d$. 

    Consider any permutation $\pi$ of $V \setminus \{ s \}$ in which the elements in $L_0$ appear first (in any order),
    followed by the elements in $L_1$ (in any order), etc.
    Since elements from $L_{i}$ can only be reached through elements of $L_{i-1}$, we can trivially select the edge weights such that $\pi$ is a linearization of $(G, s)$. 
    The number of such permutations is
    \[
        \prod_{i=0}^{d-1} |L_i|! \ge \prod_{i=0}^{d-1} 2^{|L_i|-1} = 2^{n-d}  \qedhere
    \]
\end{proof}

\noindent
The second upper bound already appeared in~\cite{cardinal2010sorting} and was paraphrased in~\cite{Hoog2019Preprocessing} and~\cite{Hoog2025Simpler}.
We use the latter formulation since it is closest to our current domain: 

\begin{lemma}[Lemma~2.3 in~\cite{Hoog2025Simpler}] \label{lemm:intervals_to_linearizations_better}
    Let $\ell(G, s)$ denote the number of linearizations of $(G, s)$. Then
    \[
        \sum_{x_i \in V \setminus \{ s \}} \log(b_i - a_i) \in O(\log \ell(G, s)).
    \]
\end{lemma}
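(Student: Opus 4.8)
The plan is to reduce Lemma~\ref{lemm:intervals_to_linearizations_better} to the two facts already available in the excerpt, namely Lemma~\ref{lemm:intervals_to_linearizations} (the Cardinal--Fiorini--Joret--Jungers--Munro bound, reproved via the random-real argument) and Lemma~\ref{lemm:long_path_lower_bound} (the long-path bound $(n-d) \in O(\log \ell(G,s))$), together with the structural fact from Lemma~\ref{lemm:points_define_linearizations} that every choice of distinct $r_i \in [a_i,b_i]$ yields a genuine linearization of $(G,s)$. Concretely, I would first invoke Lemma~\ref{lemm:intervals_to_linearizations} with the $n$ timestamp intervals $\{[a_i,b_i]\}$ produced by running Algorithm~\ref{algo:dijkstra} on $(G,s,w)$; this gives
\[
\sum_{x_i \in V \setminus \{s\}} \log(b_i - a_i) \in O(n + \log |Z|),
\]
where $Z$ is the set of linear orders realizable by distinct reals $r_i \in [a_i,b_i]$. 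By Lemma~\ref{lemm:points_define_linearizations} every order in $Z$ is a linearization of $(G,s)$, so $|Z| \le \ell(G,s)$ and the right-hand side is $O(n + \log \ell(G,s))$.

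The only gap is the stray additive $n$: we want $O(\log \ell(G,s))$ with no $n$ term. Here is where Lemma~\ref{lemm:long_path_lower_bound} enters. Running BFS from $s$ on the unweighted $G$ partitions $V \setminus \{s\}$ (together with $s$) into $d$ levels, and that lemma states $n - d \in O(\log \ell(G,s))$, i.e. $n \le d + O(\log \ell(G,s))$. So it suffices to show $d \in O(\log \ell(G,s))$ as well — or, more directly, that $\sum_i \log(b_i - a_i)$ already absorbs the $d$ levels that are ``cheap.'' I would argue as follows: a BFS level $L_k$ that is a singleton $\{u\}$ forces, for \emph{every} weighting, the interval of $u$ to have the minimum possible size, so $\log(b_u - a_u) = 0$ and such vertices contribute nothing to the left-hand sum; conversely each level contributes at least one vertex, and the number of ``non-trivial'' levels (those of size $\ge 2$) is at most $n - d$, which is $O(\log \ell(G,s))$. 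Hence in the bound $\sum_i \log(b_i-a_i) = O(n + \log|Z|)$ we may replace the $n$ by (number of vertices not in a singleton level) $+ $ (number of singleton levels contributing $0$ each), and only the first term survives, giving $O(n - d + \log \ell(G,s)) = O(\log \ell(G,s))$ by Lemma~\ref{lemm:long_path_lower_bound}.

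The cleanest way to make the previous paragraph rigorous is probably to rerun the random-real argument of Lemma~\ref{lemm:intervals_to_linearizations} but only over the vertices lying in non-singleton BFS levels: pick $r_i$ uniformly in $[0,N]$ for those $N \le n-d+d = n$ (or better, only count the $n-d$ ``extra'' vertices), sort, and observe that the resulting order, after interleaving the forced singleton vertices, is still in $Z$; Stirling then yields $\sum \log(b_i-a_i) \le \log|Z| + O(n-d)$, which combined with $|Z| \le \ell(G,s)$ and Lemma~\ref{lemm:long_path_lower_bound} finishes the proof. I expect the main obstacle to be bookkeeping the relationship between BFS levels and timestamp intervals cleanly — specifically, justifying that singleton levels really do force unit-length timestamp intervals (this uses the observation that bottleneck vertices are processed in order, push-after-pop, so $b_{u_{k-1}} \le a_{u_k}$ and there is nothing pushed strictly between them), and making sure the partition of $V \setminus \{s\}$ used by Lemma~\ref{lemm:long_path_lower_bound} lines up with the intervals used by Lemma~\ref{lemm:intervals_to_linearizations}. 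Everything else is a direct substitution of the two cited lemmas plus Lemma~\ref{lemm:points_define_linearizations}.
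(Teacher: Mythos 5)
The paper does not actually supply a proof of Lemma~\ref{lemm:intervals_to_linearizations_better}; it cites the statement from~\cite{Hoog2025Simpler} (with provenance in~\cite{cardinal2010sorting}), so there is no in-paper argument to compare yours against. Evaluated on its own, your attempt has a genuine gap: the claim that a singleton BFS level $L_k=\{u\}$ forces $b_u-a_u=1$ (hence $\log(b_u-a_u)=0$) is false. Being a bottleneck guarantees, as the paper's observation states, that the \emph{previous} bottleneck is popped before $u$ is pushed; it does not prevent non-bottleneck vertices at \emph{lower} levels from being pushed strictly between the push and the pop of $u$. Concretely, take $V=\{s,a,a',b,c,u\}$ with edges $s\to a$, $s\to a'$, $a\to b$, $a'\to c$, $b\to u$, $c\to u$ (so $L_1=\{a,a'\}$, $L_2=\{b,c\}$, $L_3=\{u\}$, a singleton), and weights $w(s,a)=1$, $w(s,a')=2.5$, $w(a,b)=1$, $w(a',c)=1$, $w(b,u)=1$, $w(c,u)=100$. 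Dijkstra pops $a,b,a',u,c$ in that order, and $c$ is pushed (when $a'$ is popped) strictly between the push and the pop of $u$, so $b_u-a_u=2$ and $\log(b_u-a_u)=1\neq 0$. Chaining such gadgets along a long bottleneck path gives $\Theta(n)$ bottleneck vertices each contributing a nonzero term, so the bottleneck contributions cannot simply be dropped.

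A second issue is the proposed substitution in your penultimate paragraph: the $O(n)$ in Lemma~\ref{lemm:intervals_to_linearizations} is the Stirling loss $n\log n-\log n!$, a global error from sampling $n$ reals; it is not a per-vertex sum that you can thin out by discarding vertices whose $\log(b_i-a_i)$ happens to be $0$. The correct fact hiding behind your intuition is different: a vertex pushed strictly inside a bottleneck window $(a_u,b_u)$ must be non-bottleneck (if it were a bottleneck at some level $j<k$ it would already have been popped before $a_u$), the bottleneck windows are pairwise disjoint, and each vertex is pushed once; hence $\sum_{u\text{ bottleneck}}(b_u-a_u-1)\le \nb$, giving $\sum_{u\text{ bottleneck}}\log(b_u-a_u)\le O(\nb)\subseteq O(\log\ell(G,s))$ via Lemma~\ref{lemm:long_path_lower_bound}. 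One can then apply Lemma~\ref{lemm:intervals_to_linearizations} only to the $\nb$ non-bottleneck intervals, so the Stirling loss becomes $O(\nb)$ rather than $O(n)$. That repair works, but it is a different argument than the one you wrote, and the claim your write-up actually leans on is incorrect.
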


\begin{theorem} \label{theo:dijkstra_compressed_comparisons}
    \cref{algo:dijkstra_compressed} performs $O(m_f + \log \ell(G, s))$ comparisons.
\end{theorem}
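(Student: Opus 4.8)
The plan is to bound the two sources of edge weight comparisons separately: the comparisons performed while processing edges (the ``edge work''), and the comparisons performed inside the heap operations (the ``heap work''), and then to invoke the linearization lemmas to convert the latter into a $\log \ell(G,s)$ bound. First I would argue that every edge weight comparison is triggered either by processing an outgoing edge of a popped vertex, or by a \textsc{Push}, \textsc{DecreaseKey}, or \textsc{Pop} heap operation, or by the exponential search inside a compressed bottleneck path. The key point is that after bottleneck compression we never spend comparisons walking through the interior of a bottleneck path one vertex at a time: a bottleneck path $u_1,\dots,u_k$ is output in a single step using $O(1+\log(j-i))$ comparisons via exponential search. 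So the only vertices on which the algorithm performs ``per-vertex'' heap work are the non-bottleneck vertices plus the endpoints where bottleneck paths are entered/left.

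Next I would count the edge-processing comparisons. When a non-bottleneck vertex $u$ is popped, we examine its outgoing edges $(u,v)$; each such examination costs $O(1)$ comparisons. After compression, every remaining edge $(u,v)$ with $u$ inside a bottleneck path has been re-rooted to $(u_1,v)$, so the surviving edges are exactly those incident to non-bottleneck vertices or to bottleneck-path heads, together with tree edges. By the definition of forward edges $m_f$, and using the observation that tree edges number $n$ while the remaining non-tree edges either go ``backward'' (and can be charged for free, since a backward edge from $u$ to $v$ with $\distance(s,v)\le\distance(s,u)$ is never relaxed in a way that costs a comparison once $v$ is already finalized — more precisely, $v$ is popped before $u$, so the edge check only performs the $\distance[v]$-is-$\infty$ test, which is not a weight comparison) or are forward and counted in $m_f$, the total edge-processing comparisons are $O(m_f + n)$. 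I would then absorb the $O(n)$ term: by Lemma~\ref{lemm:long_path_lower_bound}, $n - d \in O(\log \ell(G,s))$, and since the $d$ ``free'' levels each either lie on a bottleneck path (handled by exponential search, already accounted) or contain $\ge 2$ vertices, the contribution of the $d$ term to the per-vertex work is itself bounded by the bottleneck-path search cost plus $O(\log\ell(G,s))$.

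Finally I would bound the heap work. The heap operations that cost comparisons are the \textsc{Pop} operations, each costing $O(1+\log(b_i-a_i))$ comparisons (\textsc{Push} and \textsc{DecreaseKey} are amortized $O(1)$ and a heap can clearly realize these with $O(1)$ amortized comparisons), plus the exponential searches inside bottleneck paths, which over a single bottleneck path $u_1,\dots,u_k$ telescope to $O(k)$ total, i.e.\ $O(n)$ overall, again absorbed by Lemma~\ref{lemm:long_path_lower_bound}. Summing the \textsc{Pop} costs over all non-bottleneck vertices gives $O\!\bigl(n + \sum_{x_i}\log(b_i-a_i)\bigr)$, and by Lemma~\ref{lemm:intervals_to_linearizations_better} this is $O(n + \log\ell(G,s))$, and then $n$ is absorbed as above. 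Combining the two estimates yields $O(m_f + \log\ell(G,s))$ comparisons in total.

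The main obstacle I anticipate is the careful bookkeeping of which edge checks actually incur a weight comparison after compression — in particular, verifying that every non-tree, non-forward edge can genuinely be processed for free (this hinges on the precise semantics of the \texttt{if}-branches in \cref{algo:dijkstra}: a backward or equal-distance edge to an already-popped vertex only triggers the ``$\distance[v]$ is $\infty$?'' test, which is a comparison against $\infty$, not against a weight, and the ``$v$ is in $H$?'' test, which is a pointer test), and ensuring the re-rooting of bottleneck-path edges in the preprocessing does not itself cost comparisons (it computes prefix sums of weights, which are additions, not comparisons). The rest is a routine assembly of the three lemmas already stated in the excerpt.
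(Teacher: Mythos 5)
Your overall decomposition (edge-processing comparisons versus heap/search comparisons, then invoking the two lemmas) is on the right track and roughly matches the paper's structure, but there is a genuine gap in how you handle the exponential searches. You argue that the searches inside a bottleneck path $u_1,\dots,u_k$ telescope to $O(k)$ total comparisons, hence $O(n)$ over the whole algorithm, and you then assert that this $O(n)$ is ``absorbed by Lemma~\ref{lemm:long_path_lower_bound}.'' But Lemma~\ref{lemm:long_path_lower_bound} only gives $n-d \in O(\log\ell(G,s))$, \emph{not} $n \in O(\log\ell(G,s))$: for a directed path graph, $n$ is arbitrarily large while $\log\ell(G,s)=0$ and $m_f=0$, so an $O(n)$ bound on the exponential-search cost is not subsumed by $O(m_f+\log\ell(G,s))$. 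The paper avoids this by bounding the $\log(j-i)$ term of each exponential search not by the length of the path segment traversed, but by $\log(b_w-a_w)$, where $w$ is the vertex in the heap that terminates the search; because $w$ is popped in the next while-iteration, each search is charged to a \emph{distinct} vertex $w$, and $\sum_w \log(b_w-a_w) \in O(\log\ell(G,s))$ by \cref{lemm:intervals_to_linearizations_better}. That per-search timestamp bound (observing that $u_{i+1},\dots,u_j$ are all pushed and popped while $w$ sits in the heap in the unmodified \cref{algo:dijkstra}, so $j-i \le b_w-a_w$) is the key idea your proposal is missing; without it you only obtain $O(m_f + n)$.

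Two smaller issues: your edge-processing bound $O(m_f + n)$ also carries a spurious $n$ term, since tree edges trigger only the ``$\distance[v]$ is $\infty$'' branch and cost no weight comparison at all; the correct count is $O(m_f)$ plus $O(1)$ per while-iteration, and the number of iterations is itself bounded via the ``one non-bottleneck pop per three iterations'' argument by $O(\nb) \subseteq O(\log\ell(G,s))$. Your subsequent attempt to absorb $O(n)$ by decomposing $n$ into levels-on-bottleneck-paths versus the rest is circular, since it points back to the exponential-search cost you have not actually bounded.
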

\begin{proof}
    We first show that the number of while loop iterations is $O(\log \ell(G, s))$.
    This accounts for doing $O(1)$ comparisons per iteration.
    Let $\nb$ be the number of non-bottleneck vertices.
    Among three consecutive iterations of the while loop in \cref{algo:dijkstra_compressed},
    at least one iteration pops a non-bottleneck vertex.
    (The first iteration might pop $u_1$, the second one pops $u_2$ and does an exponential search,
    but then the third iteration pops the vertex $w$.)
    Thus, the number of while loop iterations, push operations and pop operations is $O(\nb)$.
    There are at least $n-d$ and at most $2(n-d)$ non-bottleneck vertices,
    so \cref{lemm:long_path_lower_bound} shows $\nb \in O(\log \ell(G, s))$.
    The number of comparisons performed during these \textsc{Pops} is $O(\sum\limits_{x_i \textnormal{ is popped} } (1 + \log(b_i - a_i))) = O(\overline{n} + \sum\limits_{x_i \in V \setminus \{ v \} } \log(b_i - a_i))$.
    By \cref{lemm:long_path_lower_bound} and \cref{lemm:intervals_to_linearizations_better}, this is in $O(\log \ell(G, s))$. 

    Next, let us bound the comparisons performed in the exponential search.
    The exponential search in \cref{algo:dijkstra_compressed} starts from some $i$'th vertex in a bottleneck path and returns the $j$'th vertex. 
    This procedure uses $O(1 + \log(j-i))$ comparisons. As before, we may charge the $O(1)$ term to the previous non-bottleneck vertex that was popped, charging $O(\overline{n}) \subset O(\log \ell(G, s))$ in total. 
    In \cref{algo:dijkstra}, the vertex $w$ was pushed before $u_i$, and will be popped after $u_j$.
    If we denote by $[a_w, b_w]$ the timestamp interval of $w$ then it follows that 
    $j - i \le b_w - a_w$. Each exponential search involves a different vertex $w$, 
    since $w$ will be popped in the next while loop iteration.
    Hence, the number of comparisons of the second term is at most $O(\sum\limits_{x_i \in V \backslash{ s}} \log(b_i - a_i))$,
    which is $O(\log \ell(G, s))$.
\end{proof}

\begin{theorem} \label{theo:dijkstra_comparison_lower_bound}
    Let $m_f$ be the number of forward edges for some set of edge weights $w$.
    Then, for any linear decision tree for the distance ordering problem,
    there exists a (possibly different) set of edge weights $\ol{w}$
    for which the decision tree performs $w(m_f)$ comparisons.
\end{theorem}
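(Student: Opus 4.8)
The plan is to establish a matching lower bound for Theorem~\ref{theo:dijkstra_compressed_comparisons}, i.e.\ to show that $\Omega(m_f)$ comparisons are unavoidable for any linear decision tree, so combined with the information-theoretic bound $\Omega(\log\ell(G,s))$ we recover universal optimality with respect to edge-weight comparisons. The key observation is that, fixing the weighting $w$ and the resulting tree $T$, each forward edge $(u,v)$ with $\distance(s,v)>\distance(s,u)$ is genuinely ``useless'' for the given $w$, but a small perturbation of the weights could make it the tree edge of $v$ instead. A linear decision tree that never compares along a forward edge $(u,v)$ cannot distinguish the given instance from the perturbed one, and the perturbed instance has a different distance ordering; hence the tree must make at least one comparison ``involving'' each such edge, and a counting argument over the $m_f$ forward edges gives the bound.

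Concretely, I would proceed as follows. First, fix generic weights $w$ (no two path-lengths equal) and let $T$ be the induced shortest-path tree; identify the set $F$ of forward edges, $|F|=m_f$. For a forward edge $e=(u,v)\in F$, consider the family of weightings $w^{(e)}_\varepsilon$ obtained from $w$ by decreasing $w(e)$ until $\distance(s,u)+w(e)<\distance(s,v)$ (all other weights untouched). For a sufficiently small such decrease, the only change to the Dijkstra execution is that $v$'s tree parent switches to $u$ and $v$ (and possibly a contiguous block of its descendants) moves slightly earlier in the output order — in particular the distance ordering changes, since generically a swap of two consecutive vertices occurs. Next, argue that a linear decision tree which, on input $w$, never performs a comparison whose outcome depends on $w(e)$ — equivalently, never ``queries'' $e$ — must follow the same root-to-leaf path on $w$ and on $w^{(e)}_\varepsilon$ for $\varepsilon$ small, because every comparison along that path evaluates to the same sign (the comparisons are affine in the weight vector and none is sensitive to coordinate $e$). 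Since the two inputs have different correct outputs, this is a contradiction, so the path from root to the leaf reached on input $w$ must contain a comparison sensitive to $w(e)$.

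The remaining step is the counting: I want to conclude that the number of comparisons on the $w$-path is $\Omega(m_f)$, not merely $\geq 1$. Here I would show that a single affine comparison $\sum_e c_e w(e) \lessgtr 0$ can be ``sensitive'' to at most $O(1)$ — better, I would argue it suffices that it is sensitive to at most $O(1)$ \emph{forward} edges, or else bound the total differently: choose for each $e\in F$ the perturbation direction and note that the set of forward edges whose individual perturbations are all ``absorbed'' by one comparison is limited because each forward edge $(u,v)$ has $v$ at BFS-depth strictly greater than $u$ and the perturbations act on disjoint tree-parent slots. A cleaner route, mirroring standard adversary/entropy arguments, is: consider the $2^{m_f}$ weightings obtained by independently applying or not applying each small perturbation $w^{(e)}_\varepsilon$ with the $\varepsilon$'s chosen on widely separated scales so the perturbations do not interfere; show these give $2^{\Omega(m_f)}$ distinct distance orderings; then the decision tree, restricted to this finite family, is an ordinary comparison tree with $2^{\Omega(m_f)}$ leaves and therefore has a root-to-leaf path of length $\Omega(m_f)$, and one checks this path is taken on one of the $\ol w$ in the family. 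The main obstacle I anticipate is precisely the independence/non-interference of the perturbations: one must verify that applying a subset $S\subseteq F$ of the perturbations (at geometrically separated scales) really does produce a distinct ordering for each $S$, i.e.\ that forward edges do not ``cancel'' — this requires a careful choice of the $\varepsilon$-scales and an argument that each perturbation, at its own scale, dominates the combined effect of all finer-scale perturbations, so that the induced parent-pointer change for edge $e$ survives regardless of the coarser choices. Once that is in hand, the leaf-counting argument is routine and yields the stated $w(m_f) = \Omega(m_f)$ bound.
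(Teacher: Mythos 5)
The paper's proof is an adversary argument based on a game reduction, not an output-counting argument, and this distinction is essential: your leaf-counting route cannot give the claimed bound.

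The gap is in the step where you hope that the $2^{m_f}$ perturbed weightings produce $2^{\Omega(m_f)}$ distinct distance orderings. This fails: the number of distinct orderings is at most $\ell(G,s)$, which can be exponentially smaller than $2^{m_f}$. For a concrete example, take the path $s\to v_1\to\cdots\to v_k$ together with shortcut edges $v_i\to v_k$ for $1\le i\le k-2$. Here $m_f = k-2$, but $\ell(G,s)$ is only $k-1$ (the relative order of $v_1,\dots,v_{k-1}$ is forced, and $v_k$ can land in any of $k-1$ slots). So at most $k-1$ leaves of the decision tree are ever reached and leaf-counting gives only $\Omega(\log k)$, far short of $\Omega(m_f)=\Omega(k)$. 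The true obstruction in this example is that the tree must locate the minimum of $\{\distance(v_i)+w(v_i,v_k)\}_i$ to place $v_k$ correctly, which is a \emph{verification} cost rather than an information-theoretic one: the output does not carry $\Omega(m_f)$ bits, yet $\Omega(m_f)$ comparisons are unavoidable. Your alternative route (2a), that each affine comparison is sensitive to $O(1)$ forward edges, is also false as stated: a comparison whose linear form involves every forward-edge coordinate is sensitive to all of them.

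The paper sidesteps both issues by constructing a \emph{single} adversarial weighting (plus a random permutation of the forward edges): one forward edge gets a tiny weight $1/n$ and the rest get weights at doubly separated huge scales $MUn^2\cdot 2^i$. Because of the scale separation, the outcome of any linear comparison reveals only the identity of the highest-indexed forward edge appearing in it (a ``max-of-a-subset'' query), and the solver must identify the unique tiny edge $e_1$. This embeds the combinatorial game of finding the minimum element via subset-max queries, which requires $m_f-1$ queries deterministically. That is a verification-style adversary lower bound, not an output-counting lower bound, and it is the ingredient your proposal is missing. Your opening perturbation/sensitivity observation is essentially correct as motivation, but without the game reduction the count does not go through.
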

\begin{proof}
    Given $w$, construct the exploration tree and the set of forward edges.
    For every edge $e$ in the exploration tree, put $\ol{w}(e) = w(e)$.
    Without loss of generality, suppose these weights are in $[1, M]$.
    All other non-forward edges get an arbitrary weight in $[1, M]$.
    Let $U$ be the largest coefficient used in a linear comparison performed by the linear decision tree. 

    Let $e_1, \dots, e_{m_f}$ be the forward edges in a uniformly random order.
    We put $\ol{w}(e_1) = 1 / n$ and $\ol{w}(e_i) = M U n^2 \cdot 2^i$.
    This choice of weights has the following two properties:
    If we know the distance ordering for $\ol{w}$, then we know which edge is $e_1$.
    If we perform some comparison that contains one or more $e_j$,
    the outcome of this comparisons only depends on the sign of the coefficients of
    the highest-indexed $e_j$.
    Thus, any decision tree that computes the distance ordering
    induces a strategy for following game:
    Given an array that is a permutation of $[m_f]$, you can query any subset
    of the array and get the index of the maximum.
    The goal is to find the index of the element $1$.
    Any deterministic strategy for this game needs to do $m_f-1$ queries.
    Any randomized strategy for this game with success probability
    $1/2$ needs to do $(m_f - 1)/2$ queries.
\end{proof}

\begin{theorem}
    \cref{algo:dijkstra_compressed} is universally optimal with respect to the number of edge weight comparisons made by the algorithm.
\end{theorem}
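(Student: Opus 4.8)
The plan is to combine the upper bound of \cref{theo:dijkstra_compressed_comparisons} with the lower bound of \cref{theo:dijkstra_comparison_lower_bound}, in the style of the universal-optimality theorem already proven in \cref{sec:universal}. Fix a graph topology $G = (V,E)$ and a source $s$. By \cref{theo:dijkstra_compressed_comparisons}, on \emph{every} weighting $w \in \mathbb{W}(G)$ the modified algorithm performs $O(m_f(w) + \log \ell(G,s))$ comparisons, where $m_f(w)$ is the number of forward edges induced by $w$. Since $m_f(w) \le m$ and, more importantly, since the number of comparisons is what we measure, the universal comparison complexity of \cref{algo:dijkstra_compressed} on $(G,s)$ is $O\!\left(\max_{w \in \mathbb{W}(G)} m_f(w) + \log \ell(G,s)\right)$. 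Write $m_G := \max_{w \in \mathbb{W}(G)} m_f(w)$ for this topology-dependent quantity; this is the analogue of the parameter $m_G$ from~\cite{Haeupler24}.

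For the matching lower bound, I would argue that any algorithm — equivalently, any linear decision tree — solving the distance-ordering problem on topology $G$ with source $s$ must, in the worst case over weightings, perform $\Omega(m_G + \log \ell(G,s))$ comparisons. The $\Omega(\log \ell(G,s))$ term is the information-theoretic bound already recorded at the start of \cref{sec:universal}: with $\ell(G,s)$ distinct possible outputs, some root-to-leaf path has length $\Omega(\log \ell(G,s))$. The $\Omega(m_G)$ term is exactly \cref{theo:dijkstra_comparison_lower_bound}: take the weighting $w^\ast$ achieving $m_f(w^\ast) = m_G$; then for any linear decision tree there is a weighting $\ol{w}$ (agreeing with $w^\ast$ on the exploration tree, hence with $m_f(\ol w) \ge m_G$ forward edges available to be ``hidden'') on which the tree performs $\Omega(m_G)$ comparisons. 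Combining, a worst-case weighting forces $\Omega(\max(m_G, \log\ell(G,s))) = \Omega(m_G + \log\ell(G,s))$ comparisons, which matches the upper bound up to constants. Hence \cref{algo:dijkstra_compressed} is universally optimal in the comparison model.

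The main obstacle — and the one subtlety worth spelling out — is that the lower-bound terms $\Omega(m_G)$ and $\Omega(\log \ell(G,s))$ are each realized by a \emph{worst-case} weighting, and a priori these could be two different weightings, so one only gets $\Omega(\max(m_G,\log\ell(G,s)))$ directly rather than $\Omega(m_G + \log\ell(G,s))$; but since $\max(x,y) \ge (x+y)/2$, this costs only a constant factor and is harmless for asymptotic optimality. A second point to check is that the upper bound in \cref{theo:dijkstra_compressed_comparisons} is stated per-weighting as $O(m_f(w) + \log\ell(G,s))$ with $m_f(w)$ depending on $w$; taking the max over $w$ gives $O(m_G + \log\ell(G,s))$, matching the lower bound on the same topology, which is precisely what universal optimality demands. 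I would also remark that $m_G$ here plays the role of (and is within a constant factor of) the parameter $m_G$ of~\cite{Haeupler24}, so the result is genuinely ``equally general'' as claimed in the introduction.

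\begin{proof}
    Fix a topology $G = (V,E)$ and a source $s \in V$, and define $m_G := \max_{w \in \mathbb{W}(G)} m_f(w)$, where $m_f(w)$ denotes the number of forward edges induced by the weighting $w$. By \cref{theo:dijkstra_compressed_comparisons}, for every $w \in \mathbb{W}(G)$ the number of edge-weight comparisons performed by \cref{algo:dijkstra_compressed} on $(G, s, w)$ is $O(m_f(w) + \log \ell(G,s)) = O(m_G + \log \ell(G,s))$. Hence the universal comparison complexity of \cref{algo:dijkstra_compressed} on $(G,s)$ is $O(m_G + \log \ell(G,s))$.

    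For the lower bound, consider any algorithm $A'$ solving the distance-ordering problem; its behaviour on $(G,s)$ is captured by a linear decision tree. Since $(G,s)$ admits $\ell(G,s)$ distinct linearizations, this tree has $\Omega(\ell(G,s))$ leaves, so some root-to-leaf path has length $\Omega(\log \ell(G,s))$; thus there is a weighting on which $A'$ performs $\Omega(\log \ell(G,s))$ comparisons. On the other hand, applying \cref{theo:dijkstra_comparison_lower_bound} with a weighting $w$ attaining $m_f(w) = m_G$ yields a weighting $\ol w$ on which $A'$ performs $\Omega(m_G)$ comparisons. Consequently,
    \[
        \textnormal{Universal}(A', G, s) \;\in\; \Omega\big(\max(m_G,\ \log \ell(G,s))\big) \;=\; \Omega\big(m_G + \log \ell(G,s)\big),
    \]
    matching the universal running time of \cref{algo:dijkstra_compressed} up to a constant factor. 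Therefore \cref{algo:dijkstra_compressed} is universally optimal with respect to the number of edge-weight comparisons.
\end{proof}
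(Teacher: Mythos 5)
Your proposal is correct and follows essentially the same route as the paper's own (very terse) proof: upper bound from \cref{theo:dijkstra_compressed_comparisons}, lower bound from the information-theoretic $\Omega(\log\ell(G,s))$ argument plus \cref{theo:dijkstra_comparison_lower_bound}, then combine via $\max(x,y) = \Theta(x+y)$. Your version is actually more carefully written than the paper's — you make the quantification over weightings explicit by introducing $m_G := \max_{w} m_f(w)$, whereas the paper leaves this implicit (and its proof sketch has swapped theorem references and typesets $\Omega$ as $w$) — but the logical structure and the ingredients used are identical.
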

\begin{proof}
    By \cref{theo:dijkstra_comparison_lower_bound}, \cref{algo:dijkstra_compressed}
    performs $O(m_f + \log \ell(G, s))$ comparisons.
    By an information theoretical lower bound, any algorithm needs to do $w(\log e(G, s))$ comparisons.
    By \cref{theo:dijkstra_compressed_comparisons}, any algorithm needs to do $w(m_f)$ comparisons.
\end{proof}

\end{document}